\documentclass[a4paper]{article}
\bibliographystyle{plain}

\usepackage{fullpage}
\usepackage{caption}
\usepackage{subcaption}
\usepackage{graphicx} 
\usepackage{physics}
\usepackage{mathtools}
\usepackage{amsfonts, amsmath, amssymb, amsthm}
\usepackage[linesnumbered,ruled,vlined]{algorithm2e}

\SetCommentSty{mycommfont}
\usepackage{tikz,ifthen}
\usepackage{multirow}
\usepackage{array}
\usepackage{booktabs}
\usepackage{comment}
\usepackage[sort]{cite}
\usepackage{version}
\usepackage{hyperref}
\usepackage{cleveref}
\usepackage{enumerate}
\usepackage{thm-restate}

\theoremstyle{plain}
\newtheorem{theorem}{Theorem}[section]

\theoremstyle{definition}

\newtheorem{definition}[theorem]{Definition}

\theoremstyle{remark}

\theoremstyle{plain}
\newtheorem*{theorem*}{Theorem}
\newtheorem*{lemma*}{Lemma}
\newtheorem*{prop*}{Proposition}
\newtheorem*{cor*}{Corollary}
\newtheorem*{conj*}{Conjecture}

\theoremstyle{definition}
\newtheorem*{ass*}{Assumption}
\newtheorem*{definition*}{Definition}




\captionsetup[subfigure]{labelformat=simple}

\usepackage{stmaryrd}
\newcommand{\cint}[1]{\llbracket#1\rrbracket_k}
\newcommand{\bigcint}[1]{\big\llbracket#1\big\rrbracket_k}

\usetikzlibrary{shapes.geometric}
\usetikzlibrary{shapes.misc}
\usetikzlibrary{positioning}
\usetikzlibrary{calc}

\newcommand{\argmax}{\mathop{\rm arg\,max}} 
\newcommand{\argmin}{\mathop{\rm arg\,min}}

\newcommand{\ALG}{\mathrm{ALG}}
\newcommand{\OPT}{\mathrm{OPT}}

\usepackage{silence}
\WarningFilter{latexfont}{Font shape}

\title{Online Matching with Delays and Size-based Costs}

\usepackage{orcidlink}
\author{
    Yasushi Kawase\orcidlink{0000-0001-5626-779X}\thanks{The University of Tokyo, \texttt{kawase@mist.i.u-tokyo.ac.jp}}
\and
    Tomohiro Nakayoshi\thanks{The University of Tokyo, \texttt{nakayoshi-tomohiro@g.ecc.u-tokyo.ac.jp}}
}

\date{}

\begin{document}

\includeversion{full}\excludeversion{short}

\maketitle

\begin{abstract}
In this paper, we introduce the problem of Online Matching with Delays and Size-based Costs (OMDSC).
The OMDSC problem involves $m$ requests arriving online. At any time, a group can be formed by matching any number of these requests that have been received but are still unmatched.
The cost associated with each group is determined by the waiting time for each request within the group and a size-dependent cost. Our goal is to partition all incoming requests into multiple groups while minimizing the total associated cost. 
The problem extends the TCP acknowledgment problem proposed by Dooly et al.\ (JACM 2001). It generalizes the cost model for sending acknowledgments.
%
%
%
This paper reveals the competitive ratios for a fundamental case where the range of the penalty function is limited to $0$ and $1$. We classify such penalty functions into three distinct cases: (i) a fixed penalty of $1$ regardless of group size, (ii) a penalty of $0$ if and only if the group size is a multiple of a specific integer $k$, and (iii) other situations. The problem of case (i) is equivalent to the TCP acknowledgment problem, for which Dooly et al.\ proposed a $2$-competitive algorithm.
For case (ii), we first show that natural algorithms that match all the remaining requests are $\Omega(\sqrt{k})$-competitive. We then propose an $O(\log k / \log \log k)$-competitive deterministic algorithm by carefully managing match size and timing, and we also prove its optimality. For case (iii), we demonstrate the non-existence of a competitive online algorithm. Additionally, we discuss competitive ratios for other typical penalty functions.
\end{abstract}

\section{Introduction}
\label{sec:introduction}

In the realm of online gaming platforms, the challenge of matching players for an optimal gaming experience lies in balancing minimal waiting times and high match quality. 
Consider, for instance, an online gaming platform hosting a $k$-players game, such as chess ($k=2$), Mahjong ($k=4$), and battle royal-style shooting games (e.g. $k=60$ for Apex Legends).
On such platforms, players sequentially enter a waiting queue, and the platform selects $k$ players from the queue to start a match.
To address scenarios with an insufficient number of players, the platform may opt to fill a group with computer (AI) players. While this approach ensures prompt matchmaking, it potentially compromises the quality of the gaming experience. Conversely, waiting to gather the full quota of required $k$ players may significantly increase waiting time, potentially reducing player satisfaction.



Motivated by the above challenge, we introduce the \emph{Online Matching with Delays and Size-based Costs (OMDSC)} problem. 
In this problem, requests (which correspond to players) are presented sequentially in real time.
At any time, it is possible to match a group consisting of any number of previously received but unmatched requests.
A key aspect of the problem is the \emph{penalty function}, which determines a cost based on the size of the group.
The total cost associated with each match is defined by the waiting time for each request within the group and a cost that depends on the size of the group.



Applications of the OMDSC problem extend beyond gaming, including online shopping and web services that allow batch-processing of requests.
For instance, in online shopping services, a higher penalty may be imposed for dispatching delivery trucks from warehouses with fewer orders to reduce the required resources, such as trucks and drivers.
Similarly, web services offering AI features can process multiple inputs in parallel using GPU batch processing. Such services benefit from processing a larger number of requests at once compared to processing fewer requests at once. 
Therefore, our study is also beneficial for applications where the cost difference due to the quantity of requests has a minor impact on the execution cost.


It should be noted that our problem is closely related to the TCP acknowledgment problem~\cite{Dooly2001-yl}, the online weighted cardinality \emph{Joint Replenishment Problem (JRP)} with delays~\cite{Chen2022-qj}, and the online \emph{Min-cost Perfect Matching with Delays (MPMD)}~\cite{Emek2016-sa}.
The TCP acknowledgment problem (without look-ahead) 
is equivalent to the OMDSC problem with a penalty function being constant.
There are various generalizations of the TCP acknowledgment problem that partially capture the OMDSC problem. Specifically, the online weighted cardinality JRP with delays considers costs dependent on the (weighted) cardinality of each group. In those generalizations, it is assumed that the penalty function is monotonically non-decreasing. However, the OMDSC problem treats penalty functions that are not monotone.
In the MPMD problem, requests arrive on a metric space, and matching is restricted to groups of exactly size $2$. The online \emph{Min-cost Perfect $k$-way Matching with Delays ($k$-MPMD)}~\cite{Melnyk2021-zv} extends the group size to $k$.
For more details, see \cref{subsec:related}



\subsection{Our Results}

In this paper, we determine the competitive ratios of the OMDSC problem for a fundamental and critical scenario in which the penalty function takes only values of either $0$ or $1$.
In the OMDSC problem, dividing a group into multiple smaller groups could reduce its penalty. Therefore, it is sufficient to consider a modified penalty function that can be obtained by optimally dividing a group into subgroups.
With this modification, we classify penalty functions into three cases: 
(i) the (modified) penalty is $1$, regardless of the group size,
(ii) the modified penalty is $0$ if and only if the group size is a multiple of a specific integer $k$, and
(iii) functions not covered by the above cases.


For case (i), the OMDSC problem is equivalent to the TCP acknowledgment problem. Dooly et al.~\cite{Dooly2001-yl} proposed a $2$-competitive online algorithm that matches all remaining requests when the waiting cost increases by $1$. 
For case (ii), we first examine natural algorithms that 
match all remaining requests when performing a costly match,
which we refer to as \emph{match-all-remaining} algorithms.
We demonstrate that every match-all-remaining algorithm is $\Omega(\sqrt{k})$-competitive (\Cref{thm:match_all_remaining}).
Consequently, to obtain an $o(\sqrt{k})$-competitive algorithm, we need to consider both the timing and the size of requests to be matched.
We propose an $O(\log k/\log\log k)$-competitive online algorithm by carefully managing the number of remaining requests (\Cref{thm:upper}). 
We also prove that this competitive ratio is best possible (\Cref{thm:lower_bound}).
Note that the competitive ratio for case (ii) with $k=1$ is trivial (\Cref{thm:k_is_1}) and with $k = 2$ can be obtained from a result by Emek et al.~\cite{Emek2019-if}.
For case (iii), we prove that no competitive online algorithm exists (\Cref{thm:no_multiple}).
The results are summarized in \Cref{tbl:result}.
Furthermore, we discuss competitive ratios for other typical penalty functions in \Cref{subsec:other_penalty}.


\begin{table}[htbp]
    \caption{The competitive ratios for the OMDSC problem}\label{tbl:result}
    \centering
    \begin{tabular}{cc|cc}
        \toprule
        \multicolumn{2}{c|}{\textbf{Penalty function} (with modification)} & \textbf{Competitive ratio} & \textbf{Reference}\\
        \midrule
        (i) always $1$  & & $2$  & Dooly et al.~\cite{Dooly2001-yl}\\\hline
             & $k=1$       & 1  & \Cref{thm:k_is_1} \\\cline{2-4}
        (ii) $0$ if the size is a multiple of $k$ & $k=2$       & $3$ & Emek et al.~\cite{Emek2019-if}  \\\cline{2-4}
             & general $k$ & $\Theta\left(\log k / \log \log k\right)$ & \Cref{thm:upper,thm:lower_bound}\\\hline
        (iii) other scenarios & & unbounded & \Cref{thm:no_multiple} \\
        \bottomrule
    \end{tabular}
\end{table}


\subsection{Related Work}\label{subsec:related}

An online version of matching problems was first introduced by Karp et al.~\cite{Karp1990-kk}. In their study, arriving requests need to be matched immediately upon arrival, primarily focusing on bipartite matching.
Additionally, research has been conducted on minimizing matching costs, both in settings where vertices have determined costs and in settings that consider distances in a metric space.
Mehta et al.~\cite{Mehta2007-cl} proposed applications to AdWords.
Other applications, such as food delivery, have also been considered. 
For more details, see \cite{Mehta2013-lp,Echenique2023-pd}.


On the other hand, situations like online game matchmaking and ride-sharing services allow for delaying user matches. Emek et al.~\cite{Emek2016-sa} modeled this scenario as the MPMD problem. In this setting, requests can be delayed by imposing waiting costs. 
Several online algorithms have been proposed for the MPMD problem~\cite{Azar2017-vj, Bienkowski2018-yt, Azar2020-in}.
Subsequently, Melnyk et al.~\cite{Melnyk2021-zv} extended the MPMD problem to the $k$-MPMD problem, which requires to make matchings of size exactly $k$. For this problem, both deterministic and randomized algorithms have been proposed~\cite{Melnyk2021-zv, Kakimura2023-im}. 

Dooly et al.~\cite{Dooly2001-yl} introduced the TCP acknowledgment problem, which concerns the acknowledgment of TCP packets. In this problem, we aim to minimize the sum of the costs for acknowledgments and the costs for delaying TCP packets. The optimal deterministic and randomized algorithms were proposed by Dooly et al.~\cite{Dooly2001-yl} and Karlin et al.~\cite{Karlin2001-fx}, respectively. The TCP acknowledgment problem is equivalent to the OMDSC problem, where the penalty function corresponds to case (i). Conversely, the OMDSC problem can be viewed as a generalization of the acknowledgment cost in the TCP acknowledgment problem.

Various extensions to the TCP acknowledgment problem have been proposed~\cite{Buchbinder2013-yc, Bienkowski2014-fz, Bienkowski2016-dk, Buchbinder2017-qx, Azar2019-sn, Carrasco2018-vp, Azar2020-yp, Touitou2021-dx, Chen2022-qj}.
Specifically, Chen et al.~\cite{Chen2022-qj} introduced the problem of online weighted cardinality JRP with delays. Their model can handle a penalty based on the size of a match. However, unlike our study, their penalty function is restricted to monotonically non-decreasing.
They proposed a constant competitive algorithm for the problem.


The objectives of the MPMD problem and the $k$-MPMD problem are to pair requests and form groups of exactly size $k$, respectively. In contrast, our proposed OMDSC problem does not impose constraints on the number of requests in each group.
Emek et al.~\cite{Emek2016-sa} also introduced a problem called MPMDfp, which allows clearing a request at a cost.
The MPMDfp problem on a single source is equivalent to the OMDSC problem where the penalty function corresponds to case (ii) with $k = 2$.
Additionally, the MPMDfp problem on a single source can be reduced to the MPMD problem on two sources~\cite{Emek2016-sa}.
For the MPMD on two sources, Emek et al.~\cite{Emek2019-if} proposed a deterministic algorithm, and He et al.~\cite{He2023-ot} proposed a randomized algorithm (see \Cref{sec:case_ii} for more details).


Another approach to extending the MPMD problem is by modifying waiting costs.
Liu et al.~\cite{Liu2018-wb} generalized waiting costs from linear to convex. 
Other variations of waiting costs have also been studied~\cite{Liu2022-bu, Azar2021-bb, Deryckere2023-oj}.
Pavone et al.~\cite{Pavone2022-cg} studied the \emph{Online Hypergraph Matching with Deadlines} problem. Though hypergraph matching does not impose constraints on group size, it differs from the OMDSC problem in that it employs deadlines instead of delays and requests arrive one at each unit of time.


\section{Preliminaries}


We denote the set of real numbers as $\mathbb{R}$, the set of integers as $\mathbb{Z}$, the set of non-negative integers as $\mathbb{Z}_+$, and the set of positive integers as $\mathbb{Z}_{++}$. Additionally, for a positive integer $k \in \mathbb{Z}_{++}$, we define $\mathbb{Z}_k$ as the set of integers from $0$ to $k - 1$.


\subsection{Problem Settings}
In this section, we formally define the OMDSC problem.
An instance of the problem is specified by a penalty function $f\colon\mathbb{Z}_{++}\to\mathbb{R}$ and $m$ requests $V = \{v_1, v_2, \ldots, v_m\}$, which arrive in real-time. 
Each request $v_i$ arrives at time $t_i$, where $0 \leq t_1 \leq t_2 \leq \cdots \leq t_m$ is assumed.

An online algorithm initially knows only the function $f$; it does not have prior knowledge of the arrival times and the total number of requests. Each request $v_i$ is revealed at time $t_i$.
At each time $\tau$, the algorithm can match any subset $S_j$ of requests that appeared by time $\tau$ and have not yet been matched. 
The cost to match requests in $S_j$ at time $\tau_j$ is defined by 
\begin{align*}
\textstyle
    f(|S_j|) + \sum_{v_i \in S_j} \left( \tau_j - t_i \right),
\end{align*}
where the first term represents the \emph{size cost} and the second term represents the \emph{waiting cost}.
In addition, the \emph{waiting cost} at time $\tau$ is defined as $\sum_{v_i\in V'}(\tau-t_i)$, where $V'$ is the set of unmatched but already presented requests at that time.


The objective is to design an online algorithm that matches all requests while minimizing the total cost.
We use the \emph{competitive ratio} to evaluate the performance of an online algorithm.
For a given instance $\sigma$, let $\ALG(\sigma)$ represent the cost incurred by the online algorithm, and let $\OPT(\sigma)$ denote the optimal cost with prior knowledge of all requests in the instance.
The competitive ratio of $\ALG$ for an instance $\sigma$ is defined as $\ALG(\sigma)/\OPT(\sigma)$, interpreting $0 / 0$ as $1$.
Additionally, the competitive ratio of $\ALG$ for a problem is then defined as the supremum of the competitive ratio over all instances, i.e., $\sup_{\sigma} \ALG(\sigma)/\OPT(\sigma)$.

Note that the competitive ratio defined above is the \emph{strict} competitive ratio. We can also define the \emph{asymptotic} competitive ratio as $\limsup_{\ALG(\sigma)\to\infty}\ALG(\sigma)/\OPT(\sigma)$. However, in our problem, the strict and asymptotic competitive ratios of optimal algorithms coincide. Indeed, if an algorithm is strictly $\rho$-competitive, then it is also asymptotically at most $\rho$-competitive by definition. Moreover, if no strictly $\rho$-competitive algorithm exists, we can construct an instance for each algorithm where the strict competitive ratio is at most $\rho$. Thus, by repeatedly constructing and providing such instances, we can conclude that no algorithm has an asymptotic competitive ratio better than $\rho$.
Hence, we only deal with the strict competitive ratio hereafter.






\subsection{Binary Penalty Function}

This study primarily focuses on penalty functions where the range is $\{0, 1\}$.
Other penalty functions will be discussed in \Cref{subsec:other_penalty}.
For such a binary penalty function $f$, we may be able to match $n$ requests with the size cost $0$, even if $f(n)=1$, by appropriately partitioning the requests. 
For instance, if $f(2)=f(3)=0$ and $f(7)=1$, we can match $7$ requests with a size cost of $0$ by partitioning them into groups of sizes $2$, $2$, and $3$.
We introduce a notation for the set of numbers of requests that can be matched with a size cost of $0$ as follows.
\begin{definition}
    For a penalty function $f \colon \mathbb{Z}_{++} \to \{0, 1\}$, 
    we define the zero-penalty set $B_f$ as the set of quantities that can be matched with a size cost of $0$ by optimally dividing the requests into some subsets and matching them. Formally,
    \[
    B_f \coloneqq \big\{n \in \mathbb{Z}_{++} \;\big|\; \exists n_1, n_2, \ldots, n_s\in\mathbb{Z}_{++} \text{ \rm{s.t.} } \textstyle\sum_{i=1}^s n_i = n,~f(n_i) = 0~(\forall i \in \{1, 2, \ldots, s\})\big\}.
    \]
\end{definition}
For example, if $f(1)=0$, then $B_f=\mathbb{Z}_{++}$. 
Alternatively, if $f(1)=1$ and $f(2)=f(3)=0$, then $B_f=\mathbb{Z}_{++}\setminus\{1\}$.
We can interpret that the size cost for matching $n$ requests is $0$ if $n\in B_f$ and $1$ if $n\not\in B_f$.



We classify the binary penalty functions into three types as follows:
(i) $B_f = \emptyset$,
(ii) $B_f = \{kn \mid n \in \mathbb{Z}_{++}\}$ for some $k\in\mathbb{Z}_{++}$, and
(iii) otherwise.
The case of (i) is a situation where the penalty for matching requests is always $1$ (i.e., $f(x)=1$ for all $x\in\mathbb{Z}_{++}$).
The case of (ii) is a situation where a set of requests can be matched without size cost only if the size is a multiple of $k=\min\{x\in\mathbb{Z}_{++}\mid f(x)=0\}$. This case has an application in the context of an online gaming platform that hosts a $k$-players game, as described in the Introduction.



\subsection{Other Penalty Functions}\label{subsec:other_penalty}

If the range of the penalty function is $\{0,\mu\}$ with a positive real $\mu$, we can treat it as a binary penalty function by scaling increase rates of waiting costs to be $\mu$ times slower. This is because the cost to match requests in $S_j$ at time $\tau_j$ can be written as $\mu\cdot \big(f(|S_j|)/\mu+\sum_{v_i\in S_j}(\tau_j/\mu-t_i/\mu)\big)$.
For example, if $\mu = 60$ and the unit time is one minute in the original problem, then the range of the penalty function can be treated as $\{0,1\}$ by adjusting the unit time to one hour.
Let $\mu,\lambda\in\mathbb{R}$ with $0<\mu\le \lambda$, and consider a penalty function that takes values of $0$ or within the range $[\mu,\lambda]$. By applying our $\rho$-competitive algorithm to the problem, considering only whether the penalty is zero or $\mu$, we obtain a $(\rho\cdot \lambda/\mu)$-competitive algorithm. In addition, our impossibility can be transferred in the same way.

Another interesting penalty function would be $f(n) = \lceil n / k \rceil$ for a specific integer $k$.
This penalty function appears when a service can process up to $k$ requests at once, and each processing costs a fixed amount.
For this penalty function, an algorithm similar to the algorithm for (i) is $2$-competitive for any $k$\begin{short} (see \Cref{sec:appendix_2_comp})\end{short}. 

\begin{full}
\begin{short}
In this section, we examine the OMDSC problem when the penalty function satisfies $f(n)=\lceil n/k\rceil$.
\end{short}
Note that, when $k$ goes to infinity, the penalty function is equivalent to the constant function that returns $1$. Thus, the $2$-competitive algorithm for the TCP acknowledgment problem~\cite{Dooly2001-yl} works for this extreme setting. The algorithm matches all remaining requests when the waiting cost increases by $1$ after the last matching. 

We show that a similar algorithm is $2$-competitive for the penalty function with any $k$.

\begin{theorem} \label{thm:upper_for_every_k_penalty}
    For the OMDSC problem where the penalty function $f$ holds $f(n) = \lceil n / k \rceil$, with a positive integer $k\in\mathbb{Z}_{++}$, there exists a deterministic online algorithm with a competitive ratio of $2$.
\end{theorem}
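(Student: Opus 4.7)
The plan is to design a natural adaptation of Dooly's TCP acknowledgment algorithm. I would maintain the pending requests in FIFO order and define the \emph{current batch} to be the oldest $\min(k, n)$ pending requests, where $n$ is the current number of pending requests. The algorithm matches the current batch as soon as its accumulated waiting cost reaches $1$. Because the batch always has at most $k$ requests, each match has size cost exactly $1$ and waiting cost exactly $1$, so every match costs $2$. Letting $B$ denote the total number of matches performed, it immediately follows that $\ALG \le 2B$.

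The core of the proof then reduces to showing $B \le \OPT$. For this, I would adapt Dooly's charging argument: for each match of the algorithm, locate a contribution of at least $1$ in $\OPT$'s cost, ensuring the contributions can be disjointly summed. Specifically, attribute to each request $v_i$ the value $\alpha_i = \lceil |M^{\OPT}_i|/k \rceil / |M^{\OPT}_i| + (\tau^{\OPT}_i - t_i)$, which is a per-request share of $\OPT$'s size cost plus its waiting cost. Summing over all requests yields $\sum_i \alpha_i = \OPT$.

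For each ALG match $j$ with matched set $S_j$, I would argue $\sum_{v_i \in S_j} \alpha_i \ge 1$. In the full case $|S_j| = k$, using $\lceil |M|/k \rceil / |M| \ge 1/k$ gives $\sum_{v_i \in S_j} \alpha_i \ge k \cdot (1/k) = 1$ from the size-cost term alone, leveraging super-additivity of the ceiling function. In the partial case $|S_j| < k$, I would split $S_j$ into requests matched by $\OPT$ after $\tau_j$ (for which $\OPT$'s waiting cost is at least as large as $\ALG$'s on that request) and requests matched by $\OPT$ before $\tau_j$ (for which the size-cost share $1/k$ contributes). Combined with the trigger identity $\sum_{v_i \in S_j}(\tau_j - t_i) = 1$, this yields $\sum_{v_i \in S_j} \alpha_i \ge 1$. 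Summing over the $B$ ALG matches, the left-hand side equals $\sum_i \alpha_i = \OPT$ (since each $v_i$ lies in exactly one $S_j$), giving $B \le \OPT$ and hence $\ALG \le 2\OPT$.

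The main obstacle I expect is the partial-match case: bounding $\sum_{v_i \in S_j}\alpha_i \ge 1$ when some requests in $S_j$ are matched by $\OPT$ before $\tau_j$ with $\tau_j - t_i$ possibly large. Controlling this requires exploiting the specific structure of the trigger and verifying that the $1/k$ per-request size credits accumulated across the batch, taken together with the portion of $\ALG$'s waiting cost attributable to the ``late'' requests, collectively cover the full unit $1$. Working out the arithmetic carefully, and checking that the attribution $\alpha_i$ never over-counts $\OPT$'s actual size cost on any single $\OPT$ match, is where the proof's technical weight lies.
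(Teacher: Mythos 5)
Your charging argument has a genuine gap, and you have in fact identified its location yourself but underestimated its severity. The per-match inequality $\sum_{v_i \in S_j} \alpha_i \ge 1$ is simply \emph{false} for partial batches. Take $k=100$ and let $150$ requests arrive at time $0$. Your algorithm matches the oldest $100$ at time $1/100$ (batch waiting cost $1$), then the remaining $50$ at time $1/50$ (batch waiting cost $1$). Now take $\OPT$ to match all $150$ at time $0$: size cost $\lceil 150/100\rceil = 2$, waiting cost $0$, so $\alpha_i = 2/150 = 1/75$ for every request. For the second ALG match $S_2$ (size $50$, all of it matched by $\OPT$ before $\tau_2$), we get $\sum_{v_i\in S_2}\alpha_i = 50/75 = 2/3 < 1$. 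Your case-2 decomposition cannot recover this: the ``$1/k$ size credits'' for the before-requests give at most $|S_j|/k < 1$, and there is no ``late'' waiting contribution left because $\OPT$ served everyone instantly. So the local, per-match charging is structurally unable to certify $B \le \OPT$; you would need a genuinely global amortized argument, which you have not supplied.

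Note also that your algorithm is not the one the paper analyzes. The paper's algorithm adds a second trigger: whenever the number of pending requests reaches $k$, it immediately matches $k$ of them (in addition to matching all remaining requests when the accumulated waiting cost since the last match reaches $1$). With that extra rule, batches of $k$ are cleared at zero delay, which avoids exactly the scenario above. The paper's proof then avoids per-request charging entirely: it partitions the timeline into the intervals $(\tau_{j-1},\tau_j]$ between consecutive ALG matches and shows $\OPT$ pays at least $1$ in each. If the match at $\tau_j$ was triggered by $k$ arrivals, then (using that WLOG an optimal schedule never holds $\ge k$ pending, since $f(k+x)=f(x)+1$) $\OPT$ must perform a match during the interval, paying size cost $\ge 1$; if it was triggered by waiting cost $1$, then $\OPT$ either also waits (paying $\ge 1$ in delay) or matches (paying $\ge 1$ in size cost). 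Summing over intervals gives $\OPT \ge a$ directly, with no need for the $\alpha_i$ decomposition. Adopting that algorithm and that interval argument is the cleanest way to repair your proof.
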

\begin{proof}
    Consider the algorithm $\ALG$ that matches all remaining requests when the waiting cost increases by $1$ after the last matching and matches $k$ requests when the number of remaining requests becomes more than or equal to $k$ requests.

    For simplicity, we only consider instances $\sigma$ for which the arrival times of the requests are distinct. This restriction does not lose generality because perturbing the arrival times of simultaneously arrived requests to be slightly different has no significant effect on the costs.


    For an instance $\sigma$, let $a$ be the number of matches by $\ALG$ and $\tau_j$ be time of $j$th match where $0 = \tau_0 < \tau_1 < \tau_2 < \cdots < \tau_{a}$. Note that all match times are distinct due to the imposed restriction. In addition, all the (at most $k$) remaining requests are matched at each match.
    Then, the cost of $\ALG$ for $\sigma$ is at most $2a$ since the size cost and the waiting cost are at most one for each match.
    It is sufficient to prove that the cost of the optimal algorithm $\OPT(\sigma)$ is at least $a$. 

     
    We analyze the period $(\tau_{j-1}, \tau_{j}]$ for each $j \in {1,2,\dots,a}$ and divide it into two cases: (1) $\ALG$ matches $k$ requests at $\tau_{j}$ and (2) $\ALG$ matches fewer than $k$ requests at $\tau_{j}$. In case (1), $k$ requests arrive during the period. Since $f(k + x) = f(x) + 1$ holds for any positive integer $x \in \mathbb{Z}{++}$, keeping more than or equal to $k$ requests without matching is not an optimal behavior. Thus, $\OPT$ must match at least once in the period, incurring a size cost of at least $1$. In case (2), a waiting cost of $1$ is incurred by $\ALG$ for requests that arrive during the period. If $\OPT$ does not match during the period, it incurs a waiting cost of at least $1$. If it matches during the period, it also incurs a size cost of $1$.

    Therefore, $\OPT(\sigma)$ is at least $a$, and the competitive ratio of $\ALG$ is $2$.    
\end{proof}

In particular, the best competitive ratio is $1$ for the OMDSC problem where $f(n) = n$, that is a case of $k = 1$. An online algorithm that immediately matches each request individually upon its arrival achieves a competitive ratio of $1$.

\begin{theorem}
   For the OMDSC problem with a penalty function $f(n) = n$, there exists a deterministic online algorithm with a competitive ratio of $1$. 
\end{theorem}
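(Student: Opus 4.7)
The plan is to analyze the very simple algorithm suggested in the statement: upon each arrival $v_i$ at time $t_i$, immediately form a singleton group $\{v_i\}$ and match it. Call this algorithm $\ALG$. I would first compute $\ALG(\sigma)$ exactly. Since each request is matched at its own arrival time, every group $S_j$ has $|S_j|=1$, so the waiting cost is $0$ and the size cost is $f(1)=1$ per request. Hence $\ALG(\sigma)=m$, where $m$ is the total number of requests in $\sigma$.

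Next I would lower-bound $\OPT(\sigma)$ by exploiting the key property $f(n)=n$, which makes the size cost additive over requests. For any feasible offline solution that partitions $V$ into groups $S_1,\dots,S_p$ matched at times $\tau_1,\dots,\tau_p$, the total cost is
\[
\sum_{j=1}^{p} f(|S_j|) + \sum_{j=1}^{p}\sum_{v_i\in S_j}(\tau_j-t_i) \;=\; \sum_{j=1}^{p} |S_j| + \sum_{j=1}^{p}\sum_{v_i\in S_j}(\tau_j-t_i) \;\ge\; \sum_{j=1}^{p}|S_j| \;=\; m,
\]
where the inequality uses $\tau_j\ge t_i$ for every $v_i\in S_j$ so that each waiting term is non-negative. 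Therefore $\OPT(\sigma)\ge m=\ALG(\sigma)$, giving a competitive ratio of $1$ (and the $0/0$ convention handles the trivial case $m=0$).

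There is essentially no obstacle here: the whole argument rests on the observation that when $f(n)=n$ the size cost is simply the total number of requests, so grouping never reduces it while only increasing waiting cost. Thus splitting all requests into singletons is optimal, and $\ALG$ matches $\OPT$ on every instance.
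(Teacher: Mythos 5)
Your proof is correct and takes the same approach the paper indicates: analyze the algorithm that immediately matches each request as a singleton, observe that this yields total cost exactly $m$ (since $f(1)=1$ and waiting cost is zero), and lower-bound any offline solution by $\sum_j f(|S_j|) = \sum_j |S_j| = m$ because waiting costs are nonnegative. The paper states this result without a written proof, merely remarking that immediate individual matching achieves ratio $1$; you have simply supplied the (short) justification it leaves implicit.
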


\end{full}



\section{Zero-penalty Set is Nonempty and Non-representable as Multiples}
In this section, we examine case (iii), where $B_f \neq \emptyset$ and $B_f \neq \{kn \mid n \in \mathbb{Z}_{++}\}$ for any $k \in \mathbb{Z}_{++}$. 
We demonstrate that, in this case, no algorithm has a finite competitive ratio.

To illustrate this intuitively, let us consider the case where $f(1)=1$ and $f(2)=f(3)=0$, representing a poker table for at least two players.
Imagine two players arriving initially, followed potentially by a third. If we match the first two players immediately upon their arrival, matching the subsequent player incurs a cost. Alternatively, if we wait to match the first two, an unnecessary waiting cost is incurred when no additional player appears.
We can construct similar instances for every penalty function in case (iii).
\begin{theorem} \label{thm:no_multiple}
    When the penalty function $f$ satisfies both $B_f \neq \emptyset$ and $B_f \neq \{kn \mid n \in \mathbb{Z}_{++}\}$ for every $k \in \mathbb{Z}_{++}$, the competitive ratio of any randomized algorithm for the OMDSC problem is unbounded against an oblivious adversary.
\end{theorem}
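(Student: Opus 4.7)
The plan is to apply Yao's minimax principle with a two-point distribution. Let $k = \min B_f$; in case (iii) we have $k \ge 2$, since $k = 1$ would force $B_f = \mathbb{Z}_{++}$, which is the $k=1$ subcase of (ii). The set $B_f$ is closed under addition, so $k\mathbb{Z}_{++} \subseteq B_f$, and the case (iii) hypothesis then guarantees some $s \in B_f$ with $s \not\equiv 0 \pmod{k}$. I would pick the smallest such $s$ and establish the crucial combinatorial fact that $s - k \notin B_f$: writing $s = qk + r$ with $1 \le r < k$, either $q = 1$ and $s - k = r < k = \min B_f$, or $q \ge 2$ and $s - k$ is a positive non-multiple of $k$ strictly smaller than $s$, contradicting the minimality of $s$ in either case.

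For each small $T > 0$, consider the distribution $\mathcal{D}_T$ that, with probability $1/2$ each, produces one of two instances: $I_1$, consisting of $k$ requests arriving at time $0$ and nothing else; or $I_2$, the same $k$ requests at time $0$ together with $s - k$ additional requests at time $T$. Any deterministic online algorithm behaves identically on $I_1$ and $I_2$ before time $T$. Since $k \in B_f$, we have $\OPT(I_1) = 0$, and so the only way for an algorithm to achieve a finite ratio on $I_1$ is to match all $k$ requests at time $0$ using a zero-size-cost decomposition; any delay or partial match on $I_1$ yields ratio $+\infty$. Under this immediate-match behavior, on $I_2$ the algorithm is carrying exactly the $s - k$ fresh requests from time $T$ onward, and since $s - k \notin B_f$ any partition of them into matched groups incurs total size cost at least $1$, so $\ALG(I_2) \ge 1$. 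Meanwhile $\OPT(I_2) \le kT$, achieved by delaying all matches until time $T$ and then decomposing the $s \in B_f$ requests into zero-size-cost groups.

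Consequently the expected ratio of every deterministic algorithm under $\mathcal{D}_T$ is at least $\tfrac{1}{2}\cdot 1 + \tfrac{1}{2}\cdot \tfrac{1}{kT}$, which diverges as $T \to 0^+$. Yao's minimax principle then yields, for every randomized algorithm and every finite $c$, an instance in the support of some $\mathcal{D}_T$ on which the expected competitive ratio exceeds $c$; this is exactly the statement of the theorem. The only step requiring genuine thought is the combinatorial selection of $s$ so that $s$ itself admits a zero-size-cost partition while its residual $s - k$ does not, which is precisely what the case (iii) hypothesis buys; after that, the argument is a clean adaptation of ski-rental-style lower bounds.
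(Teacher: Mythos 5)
Your proof is correct and uses the same instance construction and adversary dichotomy as the paper: both pick $k=\min B_f$ and the smallest $s\in B_f$ that is not a multiple of $k$, give $k$ requests at time $0$, and threaten to add $s-k$ more a moment later. The only packaging difference is that you phrase the argument via Yao's minimax with a two-point distribution, whereas the paper argues directly by conditioning on the probability $p$ that the randomized algorithm clears the initial requests before time $\varepsilon$; you also spell out the combinatorial fact $s-k\notin B_f$ (the paper's $\ell - k^* \notin B_f$), which the paper uses without comment.
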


\begin{proof}
    Let $k^* \coloneqq \min B_f$, where such a value must exist by the assumption that $B_f\ne\emptyset$. 
    Additionally, let $\ell\coloneqq \min B_f\setminus\{k^*n\mid n\in\mathbb{Z}_{++}\}$, where such a value must exists because $\{k^*n \mid n \in \mathbb{Z}_{++}\}\subseteq B_f$ and $B_f \neq \{k^*n \mid n \in \mathbb{Z}_{++}\}$.
    We fix an arbitrary online algorithm $\ALG$ and take a sufficiently small $\varepsilon > 0$.
    Consider an instance where $k^*$ requests are given at time $0$, and thereafter, there may be arrivals of $\ell-k^*$ additional requests depending on the behavior of $\ALG$.
    Suppose that $\ALG$ matches all the initial requests before $\varepsilon$ with probability $p$.


    If $p \geq 1/2$, 
    consider an instance where $\ell-k^*$ additional requests are given at time $\varepsilon$. Since $\ell-k^* \notin B_f$, $\ALG$ incurs an expected size cost of at least $p~(\ge 1/2)$. In contrast, the minimum total cost is $k^*\varepsilon$ by matching all $k^*+(\ell-k^*)=\ell$ requests at time $\varepsilon$. Thus, the competitive ratio is at least $\frac{p}{k^*\varepsilon}\ge\frac{1}{2k^*\varepsilon}$. 

    Conversely, if $p < 1/2$, consider the instance where no additional requests are presented. Then, the (expected) waiting cost for $\ALG$ is at least $(1 - p)k^*\varepsilon > k^*\varepsilon / 2$, while the offline optimal cost is $0$ by matching all requests at time $0$. Hence, the competitive ratio is unbounded.

    Therefore, in both scenarios, the competitive ratio is unbounded as $\varepsilon$ goes to $0$, proving that the competitive ratio for any online algorithm is unbounded.
\end{proof}


    




\section{Zero-penalty Set is Multiples of an Integer}
\label{sec:case_ii}

In this section, we investigate case (ii), where 
$B_f = \{kn \mid n \in \mathbb{Z}_{++}\}$ for $k \in \mathbb{Z}_{++}$.

When $k=1$ (i.e., $B_f=\mathbb{Z}_{++}$), we have $f(1)=0$.
Thus, the algorithm that immediately matches each request individually upon its arrival incurs a cost of $0$ and is $1$-competitive.
\begin{theorem} \label{thm:k_is_1}
    For the OMDSC problem with a penalty function $f$ such that $B_f = \mathbb{Z}_{++}$, there exists a deterministic online algorithm with a competitive ratio of $1$.
\end{theorem}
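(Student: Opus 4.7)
The plan is to exhibit an explicit online algorithm that incurs zero total cost on every instance, which is clearly the best possible competitive ratio.

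First, I would observe that the hypothesis $B_f = \mathbb{Z}_{++}$ forces $f(1) = 0$. Indeed, $1 \in B_f$ means that there exist $n_1,\dots,n_s \in \mathbb{Z}_{++}$ summing to $1$ with $f(n_i) = 0$ for all $i$; the only such decomposition is $s = 1$ and $n_1 = 1$, so $f(1) = 0$ as required.

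Next, I would define the algorithm $\ALG$ that, upon the arrival of each request $v_i$ at time $t_i$, immediately matches $v_i$ as a singleton group at time $\tau = t_i$. The size cost of this match is $f(1) = 0$ and the waiting cost is $\tau - t_i = 0$, so $\ALG(\sigma) = 0$ for every instance $\sigma$.

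Finally, I would conclude by noting that $\OPT(\sigma) \ge 0$ always and $\OPT(\sigma) \le \ALG(\sigma) = 0$, so $\OPT(\sigma) = 0$, and the competitive ratio is $0/0 = 1$ by the convention fixed in the problem setting. The main (trivial) obstacle is simply to extract $f(1) = 0$ from the definition of $B_f$; once that is observed, the rest of the argument is immediate.
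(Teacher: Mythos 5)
Your proof is correct and matches the paper's argument exactly: both observe that $B_f = \mathbb{Z}_{++}$ forces $f(1)=0$ and then run the algorithm that matches each request as a singleton immediately upon arrival, incurring total cost $0$ and hence competitive ratio $0/0 = 1$. The only difference is that you spell out the (trivial) derivation of $f(1)=0$ from the definition of $B_f$, which the paper leaves implicit.
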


For the case $k = 2$, the OMDSC problem corresponds to the problem of MPMDfp in a metric space of a single point, which can be reduced to the problem of MPMD for two sources~\cite{Emek2016-sa} by setting the distance between two sources to $2$ and giving requests for two sources simultaneously when a request is given in the OMDSC problem. Then, matching across two sources corresponds to matching only one request in the OMDSC problem. The optimal cost associated with the reduced problem is half that of the original problem. 
Emek et al.~\cite{Emek2019-if} provided a $3$-competitive online algorithm for the online MPMD problem on two sources, and demonstrated that this is best possible.
In the instances constructed for the proof of the lower bound, requests are always given to two sources simultaneously, which can be reduced to the OMDSC problem with $k = 2$. 
Therefore, we obtain the following theorem. 
\begin{theorem}[Emek et al.~\cite{Emek2019-if}]\label{thm:k_is_2}
    For the OMDSC problem with a penalty function $f$ such that $B_f = \{2n\mid n\in\mathbb{Z}_{++}\}$, there exists a $3$-competitive deterministic algorithm.
    Moreover, no deterministic online algorithm has a competitive ratio smaller than $3$.
\end{theorem}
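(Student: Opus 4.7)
The plan is to prove both bounds by establishing a cost-preserving reduction between the OMDSC problem with $k=2$ and the online MPMD problem on two sources, and then invoking the tight result of Emek et al.~\cite{Emek2019-if}. Given an OMDSC instance $\sigma$, I would construct an MPMD instance $\sigma'$ on two sources $s_1,s_2$ at distance $2$ by replacing each OMDSC request arriving at time $t$ with two simultaneous MPMD requests $v^1$ at $s_1$ and $v^2$ at $s_2$. Under this correspondence, matching an even number of OMDSC requests together (size cost $0$) corresponds to pairing up their MPMD copies within each source (distance cost $0$), and matching a single OMDSC request alone (size cost $1$) corresponds to matching its two copies across sources (distance cost $2$). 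Since every OMDSC request yields two MPMD requests with the same waiting time, waiting costs in $\sigma'$ are exactly twice those in $\sigma$.

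The first step is to prove the identity $\OPT_{\mathrm{MPMD}}(\sigma') = 2\cdot \OPT_{\mathrm{OMDSC}}(\sigma)$. The direction $\le$ is routine: simulate any OMDSC schedule directly by the map described above. The reverse direction requires a symmetrization argument: an arbitrary MPMD schedule on $\sigma'$ may contain cross-source matches that pair copies of two different OMDSC requests, say $(v_i^1,v_j^2)$ and $(v_k^1,v_l^2)$. I would argue that any such pair of cross-source matches can be replaced, without increasing the total MPMD cost, by either two within-source matches or by two ``self'' cross-source matches $(v_i^1,v_i^2)$ and $(v_j^1,v_j^2)$, using a parity argument based on the fact that the numbers of unmatched MPMD requests at the two sources are always equal under this reduction.

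Given the reduction, the upper bound is immediate: run Emek et al.'s $3$-competitive MPMD algorithm on $\sigma'$, symmetrize its output as above, and translate back to an OMDSC schedule; the resulting cost is at most $\tfrac{1}{2}\cdot 3\cdot \OPT_{\mathrm{MPMD}}(\sigma') = 3\cdot \OPT_{\mathrm{OMDSC}}(\sigma)$. For the lower bound, I would inspect the adversarial sequences used by Emek et al.\ to prove their lower bound of $3$ for two-source MPMD, and verify that requests in those instances always appear at both sources simultaneously; such instances are precisely the images of OMDSC instances under the reduction, so an OMDSC algorithm of ratio smaller than $3$ would translate back to an MPMD algorithm of ratio smaller than $3$, contradicting their bound.

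The main obstacle will be the symmetrization step: making the pairing-exchange argument rigorous while correctly accounting for waiting costs (since the alternative pairings may match requests at different times than the original schedule). I would handle this by performing the replacement at a common time chosen to be the later of the two original match times, and verifying via a short exchange-argument calculation that the total MPMD cost does not increase. Once this is settled, the rest of the proof consists of carefully composing the reduction with the known algorithmic result.
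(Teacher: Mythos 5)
Your approach coincides with the paper's: reduce OMDSC with $k=2$ to two-source MPMD at distance $2$ by duplicating each request at both sources, invoke Emek et al.'s tight $3$-competitive bound for two-source MPMD, and observe that their lower-bound adversary always places requests at both sources simultaneously, so those instances pull back to OMDSC instances. The paper spares itself the symmetrization work by routing through MPMDfp on a single point (which it observes is exactly OMDSC with $k=2$) and citing the Emek et al.\ (2016) reduction of MPMDfp to two-source MPMD; you are reconstructing that reduction directly, which is fine. One point to watch if you do carry it out yourself: for the upper bound the result must be an \emph{online} OMDSC algorithm, so ``symmetrize its output'' cannot be a post hoc cleanup of the MPMD schedule --- the symmetrization has to be maintained on the fly (for instance by keeping the set of unmatched copies identical across the two sources at every moment and performing the corresponding exchange when the simulated MPMD algorithm breaks the symmetry), whereas for the $\ge$ half of the identity $\OPT_{\mathrm{MPMD}}(\sigma') = 2\,\OPT_{\mathrm{OMDSC}}(\sigma)$ the symmetrization is applied to an offline optimum and is unproblematic.
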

He et al.~\cite{He2023-ot} proposed a $2$-competitive randomized algorithm against an oblivious adversary for the MPMD problem on two sources. 
Thus, this leads to a $2$-competitive randomized algorithm for the OMDSC problem with $B_f = \{2n\mid n\in\mathbb{Z}_{++}\}$.

In the remainder of this section, we conduct an asymptotic analysis with respect to $k$.

Firstly, we define a type of algorithm as below and demonstrate the difficulty of case (ii) compared to case (i).
\begin{definition} \label{def:match_all_remaining}
We call an algorithm for the OMDSC problem \emph{match-all-remaining} if, whenever it makes a match, it matches all remaining requests or the size is in $B_f$.
\end{definition}
For case (i), where a penalty to match requests is always $1$, a match-all-remaining algorithm achieves the optimal competitive ratio. 
However, for case (ii), we show that every match-all-remaining algorithm has a competitive ratio of $\Omega(\sqrt{k})$\begin{short}, where the proof can be found in \Cref{sec:simple_lower_bound}\end{short}.
\begin{restatable}{theorem}{thmmatchallremaining} \label{thm:match_all_remaining}
    For the OMDSC problem with a penalty function $f$ that satisfies $B_f = \{kn \mid n \in \mathbb{Z}_{++}\}$, every match-all-remaining algorithm has a competitive ratio of $\Omega(\sqrt{k})$.
\end{restatable}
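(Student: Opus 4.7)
The plan is to construct, for each match-all-remaining algorithm, an adversarial instance on which the algorithm pays cost $\Omega(\sqrt{k})$ while the offline optimum pays only $O(1)$. Since the algorithms under consideration are deterministic, the adversary can effectively simulate the algorithm and tailor the instance to its responses. The instance will consist of up to $\sqrt{k}$ bursts of $\sqrt{k}$ requests each, spaced by a small time $\delta$; since the total number of requests is at most $k$, the offline optimum has the option of aggregating all $k$ requests into one clean match of size $k$ (penalty $0$) and paying only the waiting cost. The crucial structural fact is that each individual burst has size $\sqrt{k} < k$, so a match-all-remaining algorithm cannot perform a clean match within a single burst: it must either dirty-match the burst (matching all current residue) or wait in the hope that future bursts will complete a multiple of $k$.

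The key step is a dichotomy based on the algorithm's reaction time $\tau$ relative to a carefully chosen threshold. In the \emph{responsive} branch, where the algorithm dirty-matches each burst soon after its arrival, the adversary releases all $\sqrt{k}$ bursts. The algorithm accumulates $\sqrt{k}$ dirty-match penalties, so $\ALG = \Omega(\sqrt{k})$, whereas, by choosing $\delta$ of order $1/k^{3/2}$, the total time span is $O(1/k)$ and hence $\OPT = O(1)$ by cleanly matching all $k$ requests at the end. In the \emph{patient} branch, where the algorithm waits past the adversary's threshold after some burst, the adversary stops releasing further bursts, leaving the algorithm with a residue that is not a multiple of $k$. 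The algorithm is thus forced to eventually perform a dirty match, and its waiting cost accumulated during the long wait on $\sqrt{k}$ pending requests, combined with the inevitable penalty, is again $\Omega(\sqrt{k})$, while $\OPT$ for the truncated instance remains $O(1)$.

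The main obstacle I anticipate is the simultaneous tuning of the inter-burst spacing $\delta$ and the adversary's stopping threshold $\tau$. The constraint that $\OPT$ on the full $k$-request instance be $O(1)$ forces $\delta$ small, but the patient branch requires that waiting past $\tau$ already cost $\Omega(1)$, forcing $\tau$ relatively large; a naive instance in which $\tau < \delta$ leaves an obvious escape for an algorithm that waits just past $\tau$ on the first burst. I expect the proof to overcome this tension by using a more refined adversarial schedule, for example varying burst sizes in a hierarchical or geometric fashion so that an early defection is already costly relative to the truncated optimum, ensuring both branches yield ratio $\Omega(\sqrt{k})$ against every match-all-remaining algorithm.
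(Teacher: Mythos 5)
Your proposal identifies the right target ratio and even flags the exact spot where the argument is in tension, but the tension you flag is in fact fatal as stated, and the paper's proof resolves it by a structurally different adversarial design rather than a ``more refined schedule'' of the kind you envision.

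The key problem with your construction: you want $\OPT = O(1)$ on the full $\sqrt{k}$-burst instance, which forces $\delta$ to be tiny (roughly $k^{-3/2}$). But then an adversary-aware algorithm can simply \emph{never} match before the last burst arrives, at which point it either holds a multiple of $k$ (clean match, cost $0$) or pays one penalty of $1$ --- and its waiting cost is exactly $\OPT$'s waiting cost. The ``patient branch'' escape cannot be closed by the adversary stopping early: in the truncated instance, both your algorithm and $\OPT$ face the same non-multiple-of-$k$ residue and the same negligible waiting cost, so each pays $\Theta(1)$ and the ratio is $O(1)$. Varying burst sizes geometrically does not help, because the underlying issue persists: with a pre-committed fixed-time schedule and small $\delta$, a maximally patient match-all-remaining algorithm is within a constant of optimal.

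The paper's resolution is not to make $\OPT = O(1)$, but to make $\OPT = \Theta(\sqrt{k})$ while forcing $\ALG$ to pay $\Omega(k)$. It fixes the time horizon to length $1$ and uses \emph{reactive} bursts: release $k-1$ requests at time $0$, and release another $k-1$ instantly after every match $\ALG$ performs before time $1$. Because $\ALG$ is match-all-remaining and never sees a multiple of $k$ pending, every match of $\ALG$ is a dirty match of all $k-1$ residuals, which immediately replenishes the residue to $k-1$ again. Thus $\ALG$ holds $k-1$ pending requests throughout $[0,1]$, incurring waiting cost at least $k-1$, plus a size cost of at least $1$ per match (say $a$ matches total). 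Against this, a comparison algorithm $\ALG^*$ --- crucially, \emph{not} a match-all-remaining algorithm --- trims its pending set down to $\lfloor\sqrt{k}\rfloor$ with one dirty match at the start, then cleanly matches a multiple of $k$ whenever possible, paying a new dirty match only once every $\Theta(\sqrt{k})$ bursts. This gives $\ALG^*(\sigma_\ALG) \le 2 + a/\sqrt{k} + \sqrt{k}$, and the ratio $\frac{a + k - 1}{2 + a/\sqrt{k} + \sqrt{k}}$ is $\Omega(\sqrt{k})$ for every $a$. The structural lesson is that the lower bound comes from the waiting cost a match-all-remaining algorithm is forced to accumulate (because it can never reduce its residue below $k-1$ without dirty-matching everything), not from accumulating penalties against an $O(1)$ offline optimum.
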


\begin{full}
\begin{short}
In this section, we prove that the competitive ratio of every match-all-remaining algorithm is at least $\Omega(\sqrt{k})$, if the zero-penalty set is the multiples of $k$.
Recall that a match-all-remaining algorithm is an algorithm that always matches $k$ requests or all remaining requests every time.
\thmmatchallremaining*
\end{short}

\begin{proof}
    We fix a match-all-remaining algorithm $\ALG$, and construct an instance $\sigma_{\ALG}$ depending on the behavior of $\ALG$.
    The instance $\sigma_{\ALG}$ provides $k-1$ requests at time $0$.
    Additionally, it provides $k-1$ requests that arrive immediately after each match by $\ALG$ before time $1$.

    Let $a$ be the number of matches by $\ALG$. Then, the cost of $\ALG$ for $\sigma_{\ALG}$ is at least $a + (k - 1)$, where the first and second terms correspond to the size cost and the waiting cost, respectively.
    Consider an algorithm $\ALG^*$ that runs as follows:
    \begin{itemize}
        \item If there are at least $k$ remaining requests, then match $k$ requests;
        \item If there are exactly $k-1$ remaining requests, then match $k - 1 - \lfloor \sqrt{k} \rfloor$ requests;
        \item At time $1$, match all the remaining matches.
    \end{itemize}
    This algorithm always holds at most $\sqrt{k}$ requests, hence, the waiting cost is at most $\sqrt{k}$. The size costs are incurred at most $1 + \lceil a/(1+\lfloor \sqrt{k}\rfloor) \rceil \le 2+a/\sqrt{k}$ times. The first term on the left-hand side corresponds to a match of remaining requests at time $1$.
    Thus, the cost of $\ALG^*$ for $\sigma_{\ALG}$ is at most $2 + a/\sqrt{k}+\sqrt{k}$. 
    Hence the competitive ratio of $\ALG$ is at least
    \begin{align*}
        \sup_{\sigma}\frac{\ALG(\sigma)}{\OPT(\sigma)}
        \ge \frac{\ALG(\sigma_{\ALG})}{\ALG^*(\sigma_{\ALG})}
        = \frac{a + (k - 1)}{2 + a/\sqrt{k} + \sqrt{k}} = \sqrt{k} \cdot \left(1  - \frac{1+2\sqrt{k}}{a + k + 2\sqrt{k}} \right)=\Omega(\sqrt{k}).
    \end{align*}
    Therefore, the competitive ratio of $\ALG$ is $\Omega(\sqrt{k})$.
\end{proof}
\end{full}

In \Cref{subsec:upper}, we provide an $O(\log k/\log\log k)$-competitive algorithm by utilizing partial matchings of remaining requests.
Thus, every match-all-remaining algorithm is not optimal.
Subsequently, in \Cref{subsec:lower}, we establish the lower bound of $\Omega(\log k/\log\log k)$.

Before proceeding, we introduce some notations.
Recall that $\mathbb{Z}_k=\{0,1,\dots,k-1\}$.
\begin{definition}
We write $\overline{x} \in \mathbb{Z}_k$ to represent the remainder of $x \in \mathbb{Z}$ when divided by $k$.
With this notation, $a \equiv b \pmod k$ can be expressed $\overline{a} = \overline{b}$.
Additionally, for $\ell, r \in \mathbb{Z}_+$, we define the cyclic interval
\[
    \cint{\ell, r} \coloneqq \begin{cases}
        \{ x \in \mathbb{Z}_k \mid \overline{\ell} \leq x \leq \overline{r}\} 
        =\{\overline{\ell},\overline{\ell}+1,\dots,\overline{r}\}
        & (\text{if } \overline{\ell} \leq \overline{r}), \\
        \{ x \in \mathbb{Z}_k \mid x \leq \overline{r} \text{ or } \overline{\ell} \leq x\} 
        =\{\overline{\ell},\overline{\ell}+1,\dots,k-1,0,1,\dots,\overline{r}\}
        & (\text{if } \overline{\ell} > \overline{r}).
    \end{cases}
\]
The number of elements in an interval $\cint{\ell,r}$ is denoted by $|\cint{\ell, r}|$.
\end{definition}

\subsection{Upper Bound} \label{subsec:upper}

The objective of this subsection is to prove the following theorem.

\begin{restatable}{theorem}{thmupper} \label{thm:upper}
    For the OMDSC problem where the penalty function $f$ satisfies $B_f = \{kn \mid n \in \mathbb{Z}_{++}\}$, there exists a deterministic online algorithm with a competitive ratio of $O(\log k / \log \log k)$.
\end{restatable}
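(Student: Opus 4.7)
The plan is to construct a deterministic online algorithm $A$ that operates with a hierarchy of $L = \lceil \log k / \log\log k \rceil$ time-triggered decision levels, and to bound its cost by an amortized argument against $\OPT$. Setting $b = \lceil k^{1/L}\rceil$, so that $b = \Theta(\log k)$ and $b^L \ge k$, I would associate to each level $j \in \{0,1,\dots,L\}$ a cyclic interval target $J_j$ of size $\Theta(b^j)$ in $\mathbb Z_k$, nested as $\{0\} = J_0 \subsetneq J_1 \subsetneq \cdots \subsetneq J_L = \mathbb Z_k$. The algorithm $A$ will enforce the invariant that it never holds $k$ or more unmatched requests simultaneously: whenever the pending count would reach $k$, $A$ performs a zero-cost match on $k$ of the oldest pending requests. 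Hence the state of $A$ is captured by the residue $c \in \mathbb Z_k$ of the current pending count, together with the accumulated waiting cost $W$ since the last reset.

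Time is partitioned into \emph{epochs}, each beginning with $c = 0$ and ending either with a zero-cost match (when arrivals bring $c$ cyclically back to $0$) or with a paid match (when $A$ chooses to clear the current residue). Within an epoch, $A$ maintains a current level $j$, initially $0$, and thresholds $\tau_j = \Theta(b^{j - L})$; by geometric summation $\sum_{j=0}^{L} \tau_j = \Theta(1)$. The algorithm stays in level $j$ as long as $W < \tau_j$. Once $W$ crosses $\tau_j$, $A$ checks whether $c \in J_j$, and if so pays one unit of size cost to clear $c$, ending the epoch; otherwise, $A$ advances to level $j+1$. Since $J_L = \mathbb Z_k$, the algorithm is forced to clear at level $L$, at which point $W = \tau_L = \Theta(1)$. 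This design ensures that the per-epoch cost of $A$ is bounded by $O(1)$ waiting plus at most one unit of size cost, i.e., $O(1)$.

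The main analytic step is an amortized lower bound on $\OPT$, charging groups of up to $L$ consecutive epochs of $A$ to an $\Omega(1)$ block of $\OPT$'s cost. I would show that when an epoch of $A$ ends at level $j$, the adversary's arrival pattern during the epoch must have kept $c$ outside $J_i$ at every threshold time $W = \tau_i$ for $i < j$. Because the intervals nest and grow by a factor of $b$, the adversary cannot satisfy these constraints across many consecutive epochs without either withholding arrivals (forcing $\OPT$ to incur $\Omega(1)$ in waiting cost) or producing residues that $\OPT$ itself cannot clear for free (forcing $\OPT$ to pay a size cost). Summing across all epochs yields $\ALG(\sigma) \le O(L)\cdot \OPT(\sigma)$, giving the desired competitive ratio.

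The main obstacle is executing this amortization cleanly. The difficulty is that $\OPT$'s matches are not confined to $A$'s epoch boundaries, so a careful charging scheme is needed to attribute $\OPT$'s cost to individual epochs without double-counting. I anticipate introducing a potential function on $\OPT$'s pending-request configuration modulo $k$ and arguing that each epoch of $A$ drives a combined drop of $\Omega(1/L)$ in the potential plus $\OPT$'s paid cost. The hierarchical structure of $J_0 \subsetneq \cdots \subsetneq J_L$ and the matching thresholds $\tau_j$ is what makes this tight: at each level, the adversary's ``escape'' from $J_j$ is precisely what prevents $\OPT$ from amortizing the cost away, and this is where the $\log k/\log\log k$ factor emerges.
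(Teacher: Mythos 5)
Your proposed algorithm is \emph{match-all-remaining} in the sense of \Cref{def:match_all_remaining}: it only ever performs zero-cost matches of exactly $k$ requests (when the pending count would hit $k$) or matches that clear the entire residue $c$, i.e., all remaining requests. But \Cref{thm:match_all_remaining} shows that every match-all-remaining algorithm has competitive ratio $\Omega(\sqrt{k})$, so no choice of the nested intervals $J_j$ or thresholds $\tau_j$ can bring your scheme down to $O(\log k/\log\log k)$. The adversary exploits exactly the structure you rely on: it gives $k-1$ requests at time $0$ and refreshes to $k-1$ pending immediately after each of your clears, forcing you into many epochs of cost $\Theta(1)$ while a comparison algorithm that leaves $\Theta(\sqrt{k})$ requests pending and pays only occasional size costs does much better.

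There is a second, closely related gap. Your thresholds $\tau_j$ are defined on $A$'s \emph{own} accumulated waiting cost $W$, which the adversary can decouple from $\OPT$'s cost. The paper's algorithm instead maintains the entire family $W_0,\dots,W_{k-1}$ of hypothetical waiting costs (one per possible carry-over residue), waits until the relevant $W_{\overline{a}}$ increases, and -- crucially -- performs \emph{partial} matches of size $\overline{p'-a}$ (neither all-remaining nor a multiple of $k$) at line~\ref{line:p4} of \Cref{alg:multiplek} so as to retarget which $W_i$ it is tracking. This retargeting is what lets the algorithm certify that $W_i\ge 1$ for all $i$ at the end of a phase, i.e., that \emph{every} possible offline carry-over has already paid, without itself suffering the $\Omega(\sqrt{k})$ penalty. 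Your fixed nested intervals $J_j$ cannot emulate this: the adaptive interval $\cint{p,q}$ in the paper is determined by which $W_i$ values have actually risen, not by a static geometric schedule, and without the partial-match mechanism your algorithm has no way to ``move'' its reference point $\overline{a}$ inside the interval. So the amortization you defer as ``the main obstacle'' would indeed fail, and the failure is not merely technical -- it is ruled out by the paper's own lower bound for the class your algorithm belongs to.
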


By \Cref{thm:match_all_remaining}, an $O(\log k/\log\log k)$-competitive algorithm must take into account both the timing and the size of requests to be matched.
In response to this requirement, we propose an algorithm that accounts for the costs of other algorithms.
The execution of the algorithm is organized into phases.
In each phase, our algorithm aims to increase the cost incurred by any algorithm to at least $1$, while ensuring that its own cost remains at most $O(\log k/\log\log k)$. 
We classify the competing algorithms based on the number of unmatched requests carried over from the previous phase.
As for the current phase, we can assume that algorithms only perform matches of size that are multiples of $k$, since otherwise the cost becomes at least $1$ immediately.
Furthermore, it is sufficient to consider ``greedy'' algorithms that immediately match $k$ requests as long as there are at least $k$ unmatched requests exist.

The first phase starts at the beginning.
Our algorithm runs with the following variables.
\begin{definition} \label{def:upper_variables}
At each time within each phase, the variables $W_0,W_1,\dots,W_{k-1}$, $s$, and $a$ are defined as follows:
\begin{itemize}
    \item $W_i$ $(i\in\mathbb{Z}_k)$: the waiting cost incurred thus far within the phase on the algorithm that greedily performs matches with no size cost, assuming $(k - i)$ unmatched requests carried over to the phase.
    
    \item $s$: the number of requests given thus far in the phase.

    \item $a$: the number of requests the proposed algorithm matches thus far during the phase.
\end{itemize}
\end{definition}

Note that, assuming $(k - i)$ unmatched requests carried over to the phase, the optimum cost incurred thus far within the phase is at least $\min\{1,\,W_i\}$.
At the end of the phase, our algorithm ensures $W_i \geq 1$ for all $i \in \mathbb{Z}_k$.
This implies that any algorithm incurs a cost of at least $1$ per phase, as matching a number of requests that is not a multiple of $k$ costs $1$.

During each phase, our algorithm recursively processes a subroutine with parameters of a cyclic interval $\cint{p, q}$ and an integer $l$, ensuring that $W_i\ge 1$ for all $i\not\in \cint{p,q}$ and $W_i\ge l/\alpha$ for all $i\in \cint{p,q}$, where $\alpha$ is a real such that $\alpha^\alpha = k$. Note that $\alpha = \Theta(\log k / \log \log k)$.%
\footnote{Given that $\alpha^\alpha = k$, the ratio $\alpha / (\log k / \log \log k) = \alpha / (\log \alpha^\alpha / \log \log \alpha^\alpha) = \alpha \cdot (\log \alpha + \log \log \alpha) / (\alpha \log \alpha)$ approaches $1$ as $\alpha\to\infty$. Hence, $\alpha=\Theta(\log k/\log\log k)$.}
In each iteration of the subroutine, the interval size $\big|\cint{p,q}\big|$ is reduced by a factor of $1/\alpha$ or $l$ is increased by one, with incurring cost at most $O(1)$.
Consequently, almost all $W_i$ are at least $1$ after $\Theta(\alpha)$ recursions.

At the beginning of each phase, $W_0,W_1,\dots,W_{k-1}$, $s$, and $a$ are initialized to $0$ and $\textbf{recurring}(\cint{0, k-1},\, 0)$ is called.
Throughout the phase, the algorithm updates the values of $W_0,W_1,\dots,W_{k-1}$, $s$, and $a$, appropriately.
Moreover, if at least $k$ unmatched requests exist (i.e., $s - a \geq k$), it immediately matches $k$ of them.

\begin{algorithm}[t]
\caption{The pseudo-code of $\textbf{recurring}(\cint{p, q}, l)$}\label{alg:multiplek}
\SetKwProg{Def}{def}{:}{}
\SetKwInput{Req}{Require}
\tcc{$W_0,W_1,\dots,W_{k-1}$, $s$, $a$ are variables that change as defined in \cref{def:upper_variables}}
\Def{$\mathbf{recurring}(\cint{p, q}, l)$}{
    \tcc{It is guaranteed that $p,q\in\mathbb{Z}_k$, $l\in\{0,1,\dots,\lceil\alpha\rceil\}$, $W_i\ge 1~(\forall i\notin\cint{p,q})$, $W_i\ge l/\alpha~(\forall i\in\cint{p,q})$, and $\overline{a}=p$}
    \lIf{$\big|\cint{p,q}\big|\le\alpha$ or $l\ge \alpha$}{\textbf{Exit} the recursion}\label{line:d1}
    \textbf{Wait} until $W_{\overline{a}}$ increases by $2$\tcp*{Cost:\,2}\label{line:p1}
    \lIf{$W_i\ge \frac{l+1}{\alpha}$ for all $i\in\cint{p,q}$}{\textbf{call} $\textbf{recurring}(\cint{p, q}, l+1)$}\label{line:d2}
    Let $\cint{p', q'}$ be the minimum cyclic interval where $W_i \geq \frac{l + 1}{\alpha}$ for all $i \notin \cint{p', q'}$\;\label{line:p2}
    \textbf{Wait} until $\overline{s} \in \cint{p', p - 1}$ or $W_p$ increases by $1$\tcp*{Cost:\,1}\label{line:p3}
    \lIf{$W_p$ increased by $1$ at line~$\ref{line:p2}$}{\textbf{call} $\textbf{recurring}(\cint{p, q}, l+1)$}\label{line:d3}
    \textbf{Match} $\overline{p'-a}$ requests\tcp*{Cost:\,1}\label{line:p4}
    \textbf{Wait} until $W_{\overline{a}}$ increases by $2$\tcp*{Cost:\,2}\label{line:p5}
    \If{$W_i\ge\frac{l+1}{\alpha}$ for all $i\in\cint{p',q'}$\label{line:c}}{
        \textbf{Wait} until $\overline{s}\in\cint{p,p'-1}$ or $W_{p'}$ increases by $1$\tcp*{Cost:\,$\le1$}\label{line:p6}
        \lIf(\tcp*[f]{Cost:\,$\le 1$}){$\overline{s}\in\cint{p,p'-1}$}{\textbf{match} $\overline{p-a}$ requests}\label{line:p7}
        \textbf{Call} $\textbf{recurring}(\cint{a, q}, l+1)$\;\label{line:d4}
    }
    \Else{
        Let $r\gets \argmin_{j\in\{p'+\lfloor 2L/\alpha\rfloor,\, q\}} \overline{j-p'}$\;\label{line:p8} 
        \textbf{Call} $\textbf{recurring}(\cint{p', r}, l)$\;\label{line:d5}
    }
}
\end{algorithm}

The subroutine of $\textbf{recurring}(\cint{p, q}, l)$ is defined as follows. 
See \Cref{alg:multiplek} for a formal description.
When this function is called, it is guaranteed that $p,q\in\mathbb{Z}_k$, $l\in\{0,1,\dots,\lceil\alpha\rceil\}$, $W_i \geq 1$ for all $i \notin \cint{p, q}$, $W_i \geq l / \alpha$ for all $i \in \cint{p, q}$, and $\overline{a} = p$ (recall that $\overline{a}\in\mathbb{Z}_k$ is the remainder of $a$ when divided by $k$). 
The recursion ends (line~\ref{line:d1}) when $\big|\cint{p, q}\big|$ is at most $\alpha$ (i.e., $W_i\ge 1$ for all $i\notin \cint{a,\,a+\lfloor\alpha\rfloor}$) or when $l$ is at least $\alpha$ (i.e., $W_i \geq 1$ for all $i \in \mathbb{Z}_k$).

At the beginning of each recursion, the algorithm waits until $W_{\overline{a}}$ increases by $2$ (line~\ref{line:p1}).
If $W_i\ge (l+1)/\alpha$ for all $i \in \cint{p,q}$ at the end of the wait, then $\textbf{recurring}(\cint{p, q}, l+1)$ is called (line~\ref{line:d2}).
Otherwise, $W_i$ increases by $1 / \alpha$ 
with the exception of those values of $i$ belonging to an interval $\cint{p', q'}\subseteq \cint{p, q}$ (line~\ref{line:p2}). 
In the latter case, the algorithm attempts to increase $W_i$ by $1/\alpha$ for all $i \in \cint{p', q'}$ or increase $W_i$ by $1$ except for $i$ belonging to some smaller interval within $\cint{p, q}$. 
Next, it tries to change the value of $\overline{a}$ to $p'$ by matching $\overline{p'-a}$ requests. To do this, it waits until $\overline{s}\in \cint{p', p-1}~(=\mathbb{Z}_k\setminus \cint{p,p'-1})$, but it stops waiting when $W_p$ increases by $1$ to prevent a high waiting cost (line~\ref{line:p3}, see also \cref{subfig:p3}).
If $W_p$ increases by $1$ while $\overline{s}$ is in $\cint{p, p'-1}$, $W_i$ increases by $1~(\geq 1 / \alpha)$ for all $i \in \cint{p', q'}$ and it calls $\textbf{recurring}(\cint{p, q}, l+1)$ (line~\ref{line:d3}).
Otherwise, it changes the value $\overline{a}$ to $p'$ by matching $\overline{p'-a}~(=\overline{p'-p})$ requests. 
Then, it waits until $W_{\overline{a}}~(=W_{p'})$ increases by $2$. 

When $W_{\overline{a}}$ increases by $2$, there are two cases to be considered. The first case is that $W_i\ge (l+1)/\alpha$ for all $i \in \cint{p', q'}$. Then, the algorithm attempts to call $\textbf{recurring}({\cint{p, q}}, l+1)$, but $\overline{a}$ must be set to $p$ before calling it to satisfy the condition of $\textbf{recurring}$. Therefore, the algorithm changes the value $\overline{a}$ to $p$ or increases $W_i$ by $1$ for all $i \in \cint{p, p'-1}$ in a similar way as above (line~\ref{line:p6}, see also \cref{subfig:p6}). 
Then, the algorithm calls $\textbf{recurring}({\cint{a, q}}, l+1)$ (line~\ref{line:d4}). 
The second case is that $W_i<(l+1)/\alpha$ for some $i \in \cint{p', q'}$. Then, $W_i$ increases by $1$ except for $i$ in a small interval $\cint{p', r}$. We will show that we can take $r=\argmin\nolimits_{j \in \{p' + \lfloor2L / \alpha\rfloor, q\}} \overline{j - p'}$. 
Thus, the algorithm calls $\textbf{recurring}({\cint{p', r}}, l)$ (line~\ref{line:d5}).

\begin{figure}[ht]
    \centering
    \begin{minipage}{.47\textwidth}
\centering
\begin{tikzpicture}[scale=.6,font=\small]
    \coordinate (p) at (1,0);
    \coordinate (q) at (9,0);
    \coordinate (pp) at (4,0);
    \coordinate (qq) at (8,0);

    \draw[red,thick,fill=red!10] (10,0) -- (pp) -- ++(0,0.3) -- (10,0.3) node[midway, above] {$\cint{p',p-1}$};
    \draw[red,thick,fill=red!10] (0, 0.3) -- ($(p)+(-0.5,0.3)$) -- (p) -- (0,0);

    \fill (p) circle (3pt) node[below] {$p=\overline{a}$};
    \fill (q) circle (3pt) node[below] {$q$};
    \fill (pp) circle (3pt) node[below] {$p'$};
    \fill (qq) circle (3pt) node[below] {$q'$};

    \draw[very thick] (-0.1,0) -- (10.1,0);
    \foreach \x in {0,0.5,...,10} \draw[very thin] (\x,.1) -- (\x,-.1);
    \node[below] at (10.3,0) {$\mathbb{Z}_k$};
\end{tikzpicture}
\subcaption{Situation at line \ref{line:p3}}\label{subfig:p3}
\end{minipage}%
\begin{minipage}{.06\textwidth}\mbox{}\end{minipage}%
\begin{minipage}{.47\textwidth}
\centering
\begin{tikzpicture}[scale=.6,font=\small]
    \coordinate (p) at (1,0);
    \coordinate (q) at (9,0);
    \coordinate (pp) at (4,0);
    \coordinate (qq) at (8,0);

    \draw[red,thick,fill=red!10] (p) -- ++(0,0.3) -- ($(pp)+(-.5,.3)$) node[midway, above] {$\cint{p,p'-1}$} -- (pp);

    \fill (p) circle (3pt) node[below] {$p$};
    \fill (q) circle (3pt) node[below] {$q$};
    \fill (pp) circle (3pt) node[below] {$p'=\overline{a}$};
    \fill (qq) circle (3pt) node[below] {$q'$};

    \draw[very thick] (-0.1,0) -- (10.1,0);
    \foreach \x in {0,0.5,...,10} \draw[very thin] (\x,.1) -- (\x,-.1);
    \node[below] at (10.3,0) {$\mathbb{Z}_k$};
\end{tikzpicture}
\subcaption{Situation at line \ref{line:p6}}\label{subfig:p6}
\end{minipage}%
    \caption{Relative positions of $p$, $q$, $p'$, $q'$, and $\overline{a}$ in \cref{alg:multiplek}}
    \label{fig:relations_s_p}
\end{figure}

After completing the call of \textbf{recurring}, we have $W_i\ge 1$ for all $i\not\in \cint{a, a+\lfloor\alpha\rfloor}$.
The algorithm then executes the following steps to ensure that $W_i$ is at least $1$ for every $i\in\mathbb{Z}_k$.
\begin{enumerate}[Step 1]
    \item Wait until $W_{\overline{a}} \geq 1$ and $s > a$ hold, then match a single request. \label{processing_enu_1}
    \item If $W_i<1$ for some $i \in \mathbb{Z}_k$, go to Step~\ref{processing_enu_1}.
    \item Match all remaining requests and move to the next phase.
\end{enumerate}



To analyze the competitive ratio of this algorithm, we present several lemmas.\begin{short} Some proofs are deferred to \Cref{sec:proof_upper} due to space limitations.\end{short} We denote the values of variables $W_i$, $s$, and $a$ at time $t$ as $W_i(t)$, $s(t)$, and $a(t)$, respectively.
Here, the value of $W_i$ changes continuously within each phase, and its rate of increase over time is given by $\dv*{W_i(t)}{t} = \overline{s(t) - i}$. The increase rate of $W_i$ for each $i$ can be illustrated as in \cref{fig:increase}.

\begin{figure}[h]
    \centering
    \definecolor{darkgreen}{rgb}{0,0.3,0}
\begin{minipage}{.49\textwidth}
\centering
\begin{tikzpicture}[scale=.6,font=\small]
    \coordinate (s) at (3,0);
    \coordinate (i) at (7,0);
    \coordinate (p) at (1,0);
    \coordinate (q) at (9,0);
    \fill (s) circle (1pt) node[below,red] {$\overline{s}$};
    \fill (i) circle (1pt) node[blue,below] {$i$};
    \fill[blue] (p) circle (1pt) node[below] {$p$};
    \fill (q) circle (1pt) node[below] {$q$};
    \node[blue] at ($(p)+(0,1.7)$) {$\overline{s-p}$};
    \node[blue] at ($(i)+(0,1.7)$) {$\overline{s-i}$};
    \foreach \x in {0,0.5,...,10} \draw[very thin] (\x,.1) -- (\x,-.1);
    \foreach \x [count=\j from 0] in {0,0.5,...,10} {
        \pgfmathsetmacro{\i}{mod(10.5-\x+3,10.5)}
        \def\c{darkgreen};
        \ifnum\j=4\def\c{blue}\fi
        \ifnum\j=13\def\c{blue}\fi
        \draw[thick, \c, fill=\c!10] (\i-0.1,0) -- (\i-0.1,\j/12) -- ++(0.2,0) -- (\i+0.1,0);
    }
    \draw[very thick] (-0.2,0) -- (10.2,0);
    \node[below] at (10.3,0) {$\mathbb{Z}_k$};
\end{tikzpicture}
\subcaption{$\overline{s}\in\cint{p,i-1}$}\label{subfig:increaseA}
\end{minipage}%
\begin{minipage}{.02\textwidth}\mbox{}\end{minipage}%
\begin{minipage}{.49\textwidth}
\centering
\begin{tikzpicture}[scale=.6,font=\small]
    \def\s{8}
    \coordinate (s) at (\s,0);
    \coordinate (i) at (7,0);
    \coordinate (p) at (1,0);
    \coordinate (q) at (9,0);
    \fill (s) circle (1pt) node[below,red] {$\overline{s}$};
    \fill (i) circle (1pt) node[blue,below] {$i$};
    \fill[blue] (p) circle (1pt) node[below] {$p$};
    \fill (q) circle (1pt) node[below] {$q$};
    \node[blue] at ($(p)+(0,1.7)$) {$\overline{s-p}$};
    \node[blue] at ($(i)+(0,1.7)$) {$\overline{s-i}$};
    \foreach \x in {0,0.5,...,10} \draw[very thin] (\x,.1) -- (\x,-.1);
    \foreach \x [count=\j from 0] in {0,0.5,...,10} {
        \pgfmathsetmacro{\i}{mod(10.5-\x+\s,10.5)};
        \def\c{darkgreen};
        \ifnum\j=2\def\c{blue}\fi
        \ifnum\j=14\def\c{blue}\fi
        \draw[thick, \c, fill=\c!10] (\i-0.1,0) -- (\i-0.1,\j/12) -- ++(0.2,0) -- (\i+0.1,0);
    }
    \draw[very thick] (-0.2,0) -- (10.2,0);
    \node[below] at (10.3,0) {$\mathbb{Z}_k$};
\end{tikzpicture}
\subcaption{$\overline{s}\in\cint{i,p-1}$}\label{subfig:increaseB}
\end{minipage}
    \caption{The increase rate of $W_i$}
    \label{fig:increase}
\end{figure}
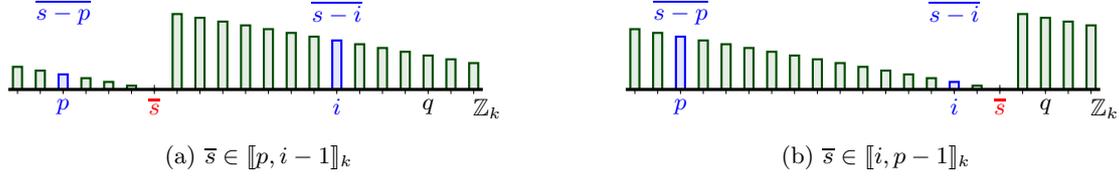

We first prove that, for $i\in\cint{p,q}$, if $W_p$ increases significantly while $W_i$ increases only slightly, then $W_p$ increases significantly during periods when $\overline{s-i}$ is small.
Specifically, $W_p$ increases at least $\alpha$ times faster than $W_i$ (i.e., $\dv*{W_p(t)}{t}\ge \alpha\cdot\dv*{W_i(t)}{t}$) at time $t$ when $\overline{s(t)}\in\cint{i,p-1}$ and 
$\overline{s(t)-i}\le \frac{1}{\alpha-1}\cdot(\overline{i-p})$ hold.
Taking these things into account, we can obtain the following lemma.

\begin{restatable}{lemma}{lmmsivalwait} \label{lmm:s_ival_wait_2}
    Let $p,q\in\mathbb{Z}_k$ and $i\in \cint{p,q}$. For two times $\tau_0,\tau_1~(\tau_0<\tau_1)$ in the same phase, suppose that $W_p(\tau_1)-W_p(\tau_0)\ge 2$ and $W_i(\tau_1)-W_i(\tau_0)< 1/\alpha$.
    Let $\gamma=\min\big\{\overline{p-i},\,\big\lceil (\overline{i - p}) / (\alpha - 1) \big\rceil\}$.
    Then, the increment of $W_p$ in time $t$ with $\tau_0\le t\le \tau_1$ and $\overline{s(t)} \in \cint{i,\, i + \gamma- 1}$ is more than $1$.
\end{restatable}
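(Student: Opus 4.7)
The plan is to decompose $[\tau_0,\tau_1]$ into the set $A$ of times $t$ with $\overline{s(t)}\in\cint{i,\,i+\gamma-1}$ and its complement $B$. Since $\dv*{W_j(t)}{t}=\overline{s(t)-j}$ for every $j\in\mathbb{Z}_k$, the quantity whose lower bound is asked for equals $\int_A \overline{s(t)-p}\,dt$, while the hypothesis gives $\int_A \overline{s(t)-p}\,dt + \int_B \overline{s(t)-p}\,dt = W_p(\tau_1)-W_p(\tau_0)\ge 2$. Thus it will suffice to show $\int_B \overline{s(t)-p}\,dt<1$.

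The core step is a pointwise inequality $\overline{s(t)-p}\le\alpha\cdot\overline{s(t)-i}$ valid for every $t\in B$. I would split $B$ by the cyclic location of $\overline{s(t)}$ into two arcs. On the arc $\overline{s(t)}\in\cint{i+\gamma,\,p-1}$ (which is empty precisely when $\gamma=\overline{p-i}$), converting residues to ordinary integers yields $\overline{s(t)-p}=\overline{s(t)-i}+\overline{i-p}$; on this arc $\gamma$ must equal $\lceil\overline{i-p}/(\alpha-1)\rceil$, and then the chain $\overline{i-p}\le(\alpha-1)\gamma\le(\alpha-1)\,\overline{s(t)-i}$ gives the bound. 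On the complementary arc $\overline{s(t)}\in\cint{p,\,i-1}$, the same residue calculation gives $\overline{s(t)-p}=\overline{s(t)-i}-\overline{p-i}\le\overline{s(t)-i}$, which is even stronger.

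Combining the pointwise inequality with the hypothesis $W_i(\tau_1)-W_i(\tau_0)<1/\alpha$ then gives
\[
\int_B \overline{s(t)-p}\,dt \;\le\; \alpha\int_B \overline{s(t)-i}\,dt \;\le\; \alpha\cdot\bigl(W_i(\tau_1)-W_i(\tau_0)\bigr) \;<\; 1,
\]
so that $\int_A \overline{s(t)-p}\,dt > 2-1 = 1$, as desired. The main obstacle is the case analysis establishing the pointwise bound: the constant $\alpha$ must hold uniformly across both arcs of $B$ regardless of which term realizes the minimum in the definition of $\gamma$, and producing exactly this constant (so that it cancels the $1/\alpha$ in the hypothesis) is what forces the particular choice of $\gamma$.
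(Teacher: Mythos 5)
Your proof is correct and takes essentially the same approach as the paper: the same decomposition of $[\tau_0,\tau_1]$ by the cyclic location of $\overline{s(t)}$, the same case split on the two arcs of the complement, and the same pointwise bound $\dv*{W_p(t)}{t}\le\alpha\cdot\dv*{W_i(t)}{t}$ via the identity $\overline{s(t)-p}=\overline{s(t)-i}+\overline{i-p}$ and $\overline{s(t)-i}\ge\gamma$. The only cosmetic difference is that the paper packages the argument as a proof by contradiction, whereas you argue directly; the underlying computation is identical.
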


\begin{full}
    \begin{proof}
    Let $T\coloneqq\big\{t\mid \tau_0\le t\le\tau_1~\text{and}~\overline{s(t)} \in \cint{i, i + \gamma - 1}\big\}$ and  $\hat{T}\coloneqq\big\{t\mid \tau_0\le t\le\tau_1~\text{and}~\overline{s(t)} \not\in \cint{i, i + \gamma - 1}\big\}$.
    Suppose to the contrary that the increment of $W_p$ within $t\in T$ is at most $1$.
    Under this condition, $W_p$ increases by at least $1$ within $t\in \hat{T}$. 
    
    When $\gamma = \overline{p - i}$, the condition $\overline{s(t)} \notin =\cint{i, i + \gamma - 1} = \cint{i, p - 1}$ is equivalent to $\overline{s(t)} \in \cint{p, i - 1}$. 
    In contrast, when $\gamma = \left\lceil \left(\,\overline{i - p}\,\right) / (\alpha - 1) \right\rceil$, the condition $\overline{s(t)} \notin \cint{i, i + \gamma - 1}$ is equivalent to $\overline{s(t)} \in \cint{p, i - 1}\cup \cint{i+\gamma,p-1}$. 

    We will observe that $\dv*{W_p(t)}{t} \leq \alpha\cdot \dv*{W_i(t)}{t}$.
    For any $t \in \hat{T}$ with $\overline{s(t)} \in \cint{p, i - 1}$, we have $\overline{s(t) - p} \leq \overline{s(t) - i}$, which implies $\dv*{W_p(t)}{t} \leq \dv*{W_i(t)}{t}\leq \alpha\cdot \dv*{W_i(t)}{t}$.
    For any $t \in \hat{T}$ with $\gamma = \left\lceil \left(\,\overline{i - p}\,\right) / (\alpha - 1) \right\rceil$ and $\overline{s(t)} \in \cint{i + \gamma, p - 1}$, we have $\overline{s(t) - i}\ge \overline{(i+\gamma)-i}=\gamma$.
    Thus, in this case, we have
    \begin{align*}
        \alpha\cdot\dv{W_i(t)}{t}
        &=\alpha\cdot(\overline{s(t)-i})
        =(\overline{s(t)-i})+(\alpha-1)\cdot(\overline{s(t)-i})
        \ge (\overline{s(t)-i})+(\alpha-1)\cdot\gamma\\
        &= (\overline{s(t)-i})+(\alpha-1)\cdot\left\lceil\tfrac{(\overline{i-p})}{\alpha-1}\right\rceil
        \ge (\overline{s(t)-i})+(\overline{i-p})
        = \overline{s(t)-p}
        =\dv{W_p(t)}{t}.
    \end{align*}

    Therefore, we have
    \begin{align*}
    \frac{1}{\alpha} \cdot 1 \le \frac{1}{\alpha} \cdot \int_{\hat{T}} \dv{W_p(t)}{t}\dd{t} \leq \int_{\hat{T}} \dv{W_i(t)}{t} \dd{t}\le \int_{\tau_0}^{\tau_1}\dv{W_i(t)}{t} \dd{t}=W_i(\tau_1)-W_i(\tau_0),
    \end{align*}
    where the second inequality holds by $\dv*{W_p(t)}{t} \leq \alpha\cdot \dv*{W_i(t)}{t}$.
    This contradicts the assumption that $W_i(\tau_1)-W_i(\tau_0)< 1/\alpha$.
\end{proof}
\end{full}



Next, by using \Cref{lmm:s_ival_wait_2}, we show that $W_i$ increases by at least $1 / \alpha$ except for some small interval after $W_p$ increases by $2$.
\begin{restatable}{lemma}{lmmoptcostwait} \label{lmm:optcost_wait_2}
    Let $p, q \in \mathbb{Z}_k$. For two times $\tau_0, \tau_1~(\tau_0 < \tau_1)$ in the same phase, suppose that $W_p(\tau_1) - W_p(\tau_0) = 2$. Then, there exist some $p',q'\in\mathbb{Z}_k$ 
    such that $\cint{p',q'}\subseteq \cint{p,q}$, $\big|\cint{p', q'}\big|\le \big\lceil \big|\cint{p, q}\big| / \alpha \big\rceil$, and $W_i(\tau_1) - W_i(\tau_0) \geq 1 / \alpha$ for all $i \in \cint{p, q} \setminus \cint{p', q'}$.
\end{restatable}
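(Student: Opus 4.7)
The plan is to apply \Cref{lmm:s_ival_wait_2} to each \emph{bad} index, where I call $i\in\cint{p,q}$ bad if $W_i(\tau_1)-W_i(\tau_0)<1/\alpha$. For every such $i$, with $\gamma_i\coloneqq\min\{\overline{p-i},\,\lceil \overline{i-p}/(\alpha-1)\rceil\}$, the lemma identifies a cyclic window $\cint{i,i+\gamma_i-1}\subseteq\mathbb{Z}_k$ during which $W_p$ gains strictly more than $1$. Since $W_p$ gains exactly $2$ across $[\tau_0,\tau_1]$, and two disjoint windows would correspond to disjoint time sets whose $W_p$-contributions would then sum to more than $2$, any two windows coming from bad indices must intersect.

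Set $j_i\coloneqq\overline{i-p}$. The bound $\gamma_i\le k-j_i$ keeps every window inside the non-wrapping arc $\{j_i,j_i+1,\dots,k-1\}\subseteq\{0,\dots,k-1\}$, so the collection of windows can be viewed as pairwise intersecting intervals on the real line. Helly's theorem on the line then furnishes a common point $s^*\in\{0,\dots,k-1\}$ contained in every window. From $s^*\in[j_i,j_i+\gamma_i-1]$ I immediately read off $j_i\le s^*$ and $s^*-j_i+1\le\gamma_i\le\lceil j_i/(\alpha-1)\rceil$, and the latter rearranges to the key inequality $j_i>s^*(\alpha-1)/\alpha$.

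Now let $B$ denote the set of bad indices, $j^*\coloneqq\max_{i\in B}j_i$, and $j_{\min}\coloneqq\min_{i\in B}j_i$. The window of a maximiser forces $s^*\ge j^*$, so $j_i>j^*(\alpha-1)/\alpha$ for every bad $i$, which in particular yields $j^*-j_{\min}<j^*/\alpha$. I will then take $\cint{p',q'}\coloneqq\cint{p+j_{\min},\,p+j^*}$ (or $\cint{p,p}$ if $B=\emptyset$); this cyclic interval sits inside $\cint{p,q}$, covers every bad index by construction, and has length
\[
j^*-j_{\min}+1 \;\le\; \lceil j^*/\alpha\rceil \;\le\; \lceil (L-1)/\alpha\rceil \;\le\; \lceil L/\alpha\rceil,
\]
where $L=|\cint{p,q}|$. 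Every $i\in\cint{p,q}\setminus\cint{p',q'}$ is then outside $B$, hence satisfies $W_i(\tau_1)-W_i(\tau_0)\ge 1/\alpha$, which is exactly what is required.

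The main obstacle I anticipate is justifying the non-wrapping/Helly step cleanly inside $\mathbb{Z}_k$: one has to verify that each window $\cint{i,i+\gamma_i-1}$ really lies inside $\{0,\dots,k-1\}$ once one restricts to bad $i$, and in particular to dispose of the degenerate case $j_i=0$ (the index $i=p$ has $\gamma_i=0$ and an ill-defined window, but $W_p$ gained $2\ge 1/\alpha$, so $p\notin B$ and it is harmless to exclude it). After that subtlety is handled, the bound reduces to an elementary manipulation of $\lceil\cdot/\alpha\rceil$.
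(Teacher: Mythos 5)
Your proof is correct. The overall framework matches the paper's: you identify the interval spanned by the ``bad'' indices $i$ with $W_i(\tau_1)-W_i(\tau_0)<1/\alpha$, invoke \Cref{lmm:s_ival_wait_2} to associate a window to each bad index during which $W_p$ gains more than $1$, and note that the total gain of $2$ forces windows to overlap. Where you diverge from the paper is in how you finish the length bound: the paper applies the lemma only to the two extremal bad indices $p'$ and $q'$ (the $\argmin$/$\argmax$ of $\overline{i-p}$) and splits into two cases depending on whether $\overline{p'-p}$ exceeds $(\alpha-1)L/\alpha$, whereas you extract a common point $s^*\ge j^*$ from the intersecting windows and deduce $j_i > s^*(\alpha-1)/\alpha \ge j^*(\alpha-1)/\alpha$ uniformly for every bad $i$, so that $j^*-j_{\min}<j^*/\alpha$ with no case split. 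Your route is arguably cleaner, though Helly is more machinery than required: you only ever use $j_{\min}>j^*(\alpha-1)/\alpha$, which follows from the single pairwise intersection between the window of $i_{\min}$ and the window of the maximiser $i^*$ (giving a point $s\ge j^*$ inside $[j_{\min},j_{\min}+\gamma_{i_{\min}}-1]$), exactly the pair the paper uses. Your treatment of the wrapping subtlety (shifting by $-p$, observing $\gamma_i\le \overline{p-i}=k-j_i$, and excluding $i=p$ since $p\notin B$) is sound and makes explicit what the paper leaves implicit in its cyclic-interval manipulations.
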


\begin{full}
    \begin{proof}
If $W_i(\tau_1)-W_i(\tau_0)\ge 1/\alpha$ for all $i\in\cint{p,q}$, then the condition can be satisfied by setting $p'=q'=p$.
Hence, in what follows, we assume that $W_i(\tau_1)-W_i(\tau_0)< 1/\alpha$ for some $i\in\cint{p,q}$.
Define
\begin{align*}
p'&\in\argmin\nolimits_{i\in\cint{p,q}:\,W_i(\tau_1)-W_i(\tau_0)< 1/\alpha}\overline{i-p}
\ \ \text{and}\\ 
q'&\in\argmax\nolimits_{i\in\cint{p,q}:\,W_i(\tau_1)-W_i(\tau_0)< 1/\alpha}\overline{i-p}.
\end{align*}
Then, it is not difficult to see that $\cint{p',q'}\subseteq \cint{p,q}$ and $W_i(\tau_1) - W_i(\tau_0) \geq 1 / \alpha$ for all $i \in \cint{p, q} \setminus \cint{p', q'}$.

What is left is to show that $\big|\cint{p', q'}\big|\le \big\lceil \big|\cint{p, q}\big| / \alpha \big\rceil$.
Let $\gamma_i \coloneqq \min \big\{\overline{p - i},~\big\lceil\left(\overline{i - p}\right) / (\alpha - 1)\big\rceil\big\}$ and $T_i \coloneqq \{t \mid \tau_0 \leq t \leq \tau_1~\text{and}~\overline{s(t)} \in \cint{i, i + \gamma_i - 1}\}$ for $i\in\{p',q'\}$.
Then, by \Cref{lmm:s_ival_wait_2}, the increments of $W_p$ within $t \in T_{p'}$ and $t\in T_{q'}$ are more than $1$, respectively.
Since the total increment of $W_p$ is $2$, the intervals $\cint{p', p' + \gamma_{p'} - 1}$ and $\cint{q', q' + \gamma_{q'} - 1}$ must overlap. 

If $\overline{p' - p} \leq (\alpha - 1) \cdot \big|\cint{p, q}\big| / \alpha$, then
$\gamma_{p'} \leq \big\lceil (\overline{p' - p})/(\alpha - 1)\big\rceil \leq \big\lceil \big|\cint{p, q}\big|/\alpha\big\rceil$.
From $\cint{p', p' + \gamma_{p'} - 1}\cap \cint{q', q' + \gamma_{q'} - 1}\ne\emptyset$ and $\overline{p' - p} \le \overline{q' - p}$, we get $q' \in \cint{p', p' + \gamma_{p'} - 1}$. This implies that $\big|\cint{p',q'}\big|=(\overline{q' - p'}) + 1 \leq \gamma_{p'} \leq \big\lceil \big|\cint{p, q}\big| / \alpha \big\rceil$.
On the other hand, if $\overline{p' - p} > (\alpha - 1) \cdot \big|\cint{p, q}\big| / \alpha$, we obtain
\begin{align*}
    \big|\cint{p',q'}\big|
    &\le \big|\cint{p',q}\big|
    = \big|\cint{p,q}\big| - \big|\cint{p,p'-1}\big|
    = \big|\cint{p,q}\big| - \overline{p'-p}\\
    &< \big|\cint{p,q}\big| - (\alpha - 1) \cdot \big|\cint{p, q}\big| / \alpha
    = \big|\cint{p, q}\big| / \alpha
    \le \big\lceil \big|\cint{p, q}\big| / \alpha \big\rceil. \qedhere
\end{align*}
\end{proof}
\end{full}

Now, we demonstrate that $\textbf{recurring}$ satisfies three desired conditions.
\begin{restatable}{lemma}{lmmreccondition} \label{lmm:rec_condition}
    When $\textbf{recurring}(\cint{p, q}, l)$ is called, the following three conditions are satisfied:
    $(i)$ $W_i \geq 1$ for all $i \notin \cint{p, q}$, 
    $(ii)$ $W_i \geq l / \alpha$ for all $i \in \cint{p, q}$, and
    $(iii)$ $\overline{a} = p$.
\end{restatable}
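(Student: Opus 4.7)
I would prove \Cref{lmm:rec_condition} by induction on the depth of the call to $\textbf{recurring}$. The base case is the initial invocation $\textbf{recurring}(\cint{0, k-1}, 0)$ at the beginning of each phase: (i) holds vacuously since the complement of $\mathbb{Z}_k$ in itself is empty, (ii) holds since every $W_i$ is initialized to $0 = 0/\alpha$, and (iii) holds since $a = 0$.

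For the inductive step, assume (i)--(iii) hold at some call $\textbf{recurring}(\cint{p, q}, l)$, and verify them at each of the four recursive calls made in the body (lines \ref{line:d2}, \ref{line:d3}, \ref{line:d4}, \ref{line:d5}). Condition (iii) is immediate in all four calls, because the only matchings in the body are at lines \ref{line:p4} and \ref{line:p7}, and the first endpoint of each new interval equals the current value of $\overline{a}$ by design. Condition (ii) for lines \ref{line:d2} and \ref{line:d4} combines the line-\ref{line:p2} guarantee $W_i \ge (l+1)/\alpha$ for $i \notin \cint{p', q'}$ with the branching condition at line \ref{line:c} for $i \in \cint{p', q'}$. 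For line \ref{line:d3}, the pointwise rate identity $\dv*{W_i(t)}{t} = \overline{s(t) - i} \ge \overline{s(t) - p} = \dv*{W_p(t)}{t}$, valid for $i \in \cint{p', q'} \subseteq \cint{p', p-1}$ while $\overline{s(t)} \in \cint{p, p'-1}$ during the line-\ref{line:p3} wait, combined with the $1$-unit rise of $W_p$ at termination, pushes each such $W_i$ up by at least $1$, so that $W_i \ge l/\alpha + 1 \ge (l+1)/\alpha$. For line \ref{line:d5}, (ii) follows trivially from monotonicity since the parameter $l$ is unchanged and $\cint{p', r} \subseteq \cint{p, q}$.

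The main obstacle is condition (i), and it splits into two nontrivial subcases. In the sub-branch of line \ref{line:d4} with $\overline{a} = p'$, where the new complement gains the indices $\cint{p, p'-1}$, I would apply the symmetric cyclic-rate comparison $\dv*{W_i(t)}{t} \ge \dv*{W_{p'}(t)}{t}$ for $i \in \cint{p, p'-1}$ while $\overline{s(t)} \in \cint{p', p-1}$ during the line-\ref{line:p6} wait, so that the $1$-unit rise in $W_{p'}$ at termination yields $W_i \ge 1$. For line \ref{line:d5}, I must establish $W_i \ge 1$ on the enlarged complement $\cint{p, p'-1} \cup \cint{r+1, q}$; my plan is to combine (a) the line-\ref{line:p2} guarantee $W_i \ge (l+1)/\alpha$ for every index outside $\cint{p', q'}$, (b) the incremental $1/\alpha$ gains produced by \Cref{lmm:optcost_wait_2} applied to the $W_p$-rise of $2$ at line \ref{line:p1} and the $W_{p'}$-rise of $2$ at line \ref{line:p5}, each of which upgrades $W_i$ outside a bad sub-interval of size at most $\lceil L/\alpha\rceil$, and (c) the choice $r \gets \argmin_{j \in \{p' + \lfloor 2L/\alpha \rfloor,\, q\}} \overline{j - p'}$, designed so that both bad sub-intervals from (b) are absorbed into $\cint{p', r}$; the factor $2$ in $\lfloor 2L/\alpha\rfloor$ is tuned to supply exactly this amount of slack. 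The delicate step, and the main technical difficulty I anticipate, is closing the arithmetic to verify that the combined bounds actually reach $1$ on the outer complement; this is where the interaction between the exit threshold $l \ge \alpha$ at line \ref{line:d1} and the per-iteration gains becomes essential.
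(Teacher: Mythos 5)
Your induction scaffold and the arguments for lines~\ref{line:d2}, \ref{line:d3}, and \ref{line:d4} match the paper's proof. The gap is in condition~(i) for the line-\ref{line:d5} branch, and it is a real one, not merely arithmetic left to be checked. You propose to handle the newly-excluded indices $\cint{p,p'-1}\cup\cint{r+1,q}$ by invoking \Cref{lmm:optcost_wait_2} on the two waits (lines~\ref{line:p1} and \ref{line:p5}) and accumulating $1/\alpha$ gains. But condition~(i) demands $W_i\ge 1$, and by the induction hypothesis these indices only carry $W_i\ge l/\alpha$ entering the recursion; since line~\ref{line:d5} does \emph{not} increment $l$, a pair of $1/\alpha$ boosts lands you at roughly $(l+2)/\alpha$, which can be far below $1$ when $l$ is small (e.g.\ $l=0$ at the top of a phase). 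There is no amortization across later recursions to save this: the invariant must hold \emph{at the moment} $\textbf{recurring}(\cint{p',r},l)$ is entered, so the threshold $l\ge\alpha$ at line~\ref{line:d1} cannot be brought to bear. The paper instead applies \Cref{lmm:s_ival_wait_2} (not \Cref{lmm:optcost_wait_2}) to the witness index $i\in\cint{p',q'}$ with $W_i$-increase less than $1/\alpha$ at line~\ref{line:p5}: that lemma yields that $W_{p'}$ gains \emph{more than $1$} while $\overline{s}\in\cint{i,i+\gamma-1}$, and the rate comparison $\overline{s-j}\ge\overline{s-p'}$ for $j\in\cint{i+\gamma,p'-1}$ then gives a full unit gain $W_j\ge 1$ for every such $j$ in one shot, not a $1/\alpha$ increment. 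The containment $\cint{p',i+\gamma-1}\subseteq\cint{p',p'+\lfloor 2L/\alpha\rfloor}$ is then what justifies the choice of $r$. Without replacing your $1/\alpha$-gain argument by this unit-gain argument via \Cref{lmm:s_ival_wait_2}, the line-\ref{line:d5} case of condition~(i) does not close.
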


\begin{proof}
We prove this by induction. 

Initially, when $\textbf{recurring}(\cint{0, k-1}, 0)$ is called at the beginning of a phase, the conditions are satisfied as $W_i=0\ge 0/\alpha$ for all $i\in \cint{0,k-1}$ and $\overline{a}=p=0$.

As an induction hypothesis, suppose the conditions are satisfied when $\textbf{recurring}(\cint{p, q}, l)$ is called. Our goal is to show that these conditions continue to be satisfied at the next recursive call.
We analyze this based on where the recursive call is made in the conditional branching of \Cref{alg:multiplek}. We denote by $L$ the cardinality of $\cint{p, q}$.

\begin{description}
\setlength{\itemsep}{6pt}
\item[If the condition of line~\ref{line:d2} is true and ${\textbf{recurring}(\cint{p, q}, l + 1)}$ is called] 
    Conditions (i) and (iii) are satisfied as $\overline{a} = p$. Condition (ii) is also satisfied since the condition of line~\ref{line:d2} is true. 

\item[If the condition of line~\ref{line:d2} is false] 
    Since the proposed algorithm waits until $W_{\overline{a}}$ increases by $2$ at line~\ref{line:p1}, the interval $\cint{p', q'}$ chosen at line~\ref{line:p2} satisfies $\big|\cint{p', q'} \big| \leq \lceil L / \alpha \rceil$ by \Cref{lmm:optcost_wait_2}.

\item[If the condition of line~\ref{line:d3} is true and ${\textbf{recurring}(\cint{p, q}, l + 1)}$ is called] 
    Since $W_p$ increases by $1$ while $\overline{s} \in \cint{p, p' - 1}$,  
    $W_i$ also increases by at least $1$ for all $i \in \cint{p', p - 1}$ because $\overline{s-i}\ge \overline{s-p}$. 
    Thus, conditions (i) and (iii) are satisfied as $\overline{a} = p$. Additionally, for all $i \in \cint{p', q'}$, $W_i$ increases by $1 / \alpha$, and hence condition (ii) is also satisfied.

\item[If the condition of line~\ref{line:d3} is false] 
    Since $\overline{s} \in \cint{p', p-1}$, the proposed algorithm can match $(\overline{p' - a})=(\overline{p'-p})$ requests at line~\ref{line:p4}. This match results in $\overline{a} = p'$. 

\item[If the condition of line~\ref{line:c} is true and ${\textbf{recurring}(\cint{a, q}, l + 1)}$ is called] 
    at line~\ref{line:p7}, if $\overline{s} \in \cint{p, p' - 1}$, the proposed algorithm can match $(\overline{p - a})=(\overline{p-p'})$ requests, making $\overline{a} = p$.
    On the other hand, if $W_{p'}$ increases by $1$ at line~\ref{line:p6}, it indicates that $\overline{s} \in \cint{p', p - 1}$ during the increase. Thus, for any $i \in \cint{p, p' - 1}$, $W_i$ increases by at least $1$, and $\overline{a} = p'$.
    This satisfies conditions (i) and (iii) in both cases, and condition (ii) is satisfied since the condition of line~\ref{line:c} is true.

\item[If the condition of line~\ref{line:c} is false and ${\textbf{recurring}(\cint{p', r}, l)}$ is called] 
    There exists an index $i \in \cint{p', q'}$ such that $W_i$ increases by less than $1 / \alpha$ at line~\ref{line:p5}.
    With $\overline{a} = p'$ and letting $\gamma = \min \big\{\overline{p' - i},~\big\lceil \left(\overline{i - p'}\right) / (\alpha - 1) \big\rceil\big\}$, \Cref{lmm:s_ival_wait_2} ensures that $W_{\overline{a}}$ increases by more than $1$ during $\overline{s} \in \cint{i, i + \gamma - 1}$.
    Therefore, for all $j \in \cint{i + \gamma, p' - 1}$, $W_j$ also increases by at least $1$.
    By $(\overline{i-p'})\le (\overline{q'-p'})\le \big\lceil \cint{p',q'}\big\rceil-1\le \lceil L/\alpha\rceil-1\le \lfloor L/\alpha\rfloor$,
    we obtain
    \begin{align*}
        \cint{p',\, i + \gamma - 1} 
        &\subseteq \bigcint{ p',\, i + \big\lceil (\overline{i-p'})/(\alpha-1)\big\rceil -1 } \\
        &\subseteq \bigcint{ p',\, p' + (\overline{i - p'}) + \big\lceil (\overline{i-p'})/(\alpha-1)\big\rceil -1 } \\
        &\subseteq \bigcint{ p',\, p' + 2(\overline{i - p'})} 
        \subseteq \bigcint{ p',\, p' + 2\lfloor L/\alpha\rfloor}
        \subseteq \bigcint{ p',\, p' + \lfloor 2L/\alpha\rfloor}.
    \end{align*}
    Thus, by setting $r \coloneqq \argmin\nolimits_{j \in \{p' + \lfloor2L / \alpha\rfloor, q\}} \overline{j - p'}$, $W_j$ increases by at least $1$ for all $j \in \cint{p, q} \setminus \cint{p', r}$.
    Since $\overline{a} = p'$, conditions (i) and (iii) are satisfied, and by the induction hypothesis, condition (ii) is also satisfied. \qedhere
\end{description}
\end{proof}

We can bound the number of recursions called by the proposed algorithm from \Cref{lmm:rec_condition}, giving us the upper bound of the cost incurred by the proposed algorithm during each phase.

\begin{restatable}{lemma}{lmmalgcostphase} \label{lmm:algcost_phase}
    The proposed algorithm incurs a cost of $O(\alpha)$ per phase.
\end{restatable}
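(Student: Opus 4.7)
The plan is to decompose the cost of a single phase into (1) the cost accrued inside all invocations of $\textbf{recurring}$ and (2) the cost accrued during the post-processing Steps~1--3, and to show each piece is $O(\alpha)$.

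First, I would observe that a single execution of the body of $\textbf{recurring}(\cint{p,q}, l)$, excluding its recursive subcalls, contributes at most $O(1)$ cost. The size cost is at most $2$, coming from the two possible matches at lines~\ref{line:p4} and \ref{line:p7}. For the waiting cost, I would use the fact that the proposed algorithm matches $k$ requests as soon as $s - a \ge k$, so $s(t) - a(t) \in \{0,1,\dots,k-1\}$ at all times; consequently $\overline{s(t) - \overline a(t)} = s(t) - a(t)$, which makes $\dv*{W_{\overline a(t)}(t)}{t}$ equal to the instantaneous waiting-cost rate. Hence each wait at lines~\ref{line:p1}, \ref{line:p3}, \ref{line:p5}, and \ref{line:p6} accrues waiting cost exactly equal to the prescribed growth of $W_{\overline a}$, for a total of at most $2+1+2+1 = 6$.

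Second, I would bound the total number of invocations of $\textbf{recurring}$ in one phase by $O(\alpha)$; since the recursion is a chain (each call triggers at most one recursive subcall), this equals its depth. Each call along the chain is of one of two types: type~(a) at lines~\ref{line:d2}, \ref{line:d3}, or \ref{line:d4}, which strictly increases $l$; type~(b) at line~\ref{line:d5}, which keeps $l$ fixed but replaces the interval $\cint{p,q}$ of size $L$ by the strictly smaller interval $\cint{p', r}$. The interval passed to every child call is always a sub-interval of the current $\cint{p,q}$ (for type~(a), either $\cint{p,q}$ itself or $\cint{\overline a, q}\subseteq\cint{p,q}$; for type~(b), $\cint{p',r}\subseteq\cint{p,q}$), so interval size is monotonically non-increasing along the chain. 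Type~(a) calls occur at most $\lceil\alpha\rceil$ times, because the recursion exits at line~\ref{line:d1} once $l \ge \alpha$. For type~(b), whenever line~\ref{line:d5} is reached we have $L > \alpha$, so $\lfloor 2L/\alpha\rfloor + 1 \le 2L/\alpha + L/\alpha = 3L/\alpha$, giving a shrink factor of at least $\alpha/3$; since the recursion exits once $L \le \alpha$, at most $\log_{\alpha/3}(k/\alpha) = O(\alpha)$ type~(b) calls occur (using $\log k = \alpha \log \alpha$). Combining the two, we have $O(\alpha)$ calls, each of cost $O(1)$, yielding $O(\alpha)$ cost from the recursion.

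Third, I would handle the post-processing. By \Cref{lmm:rec_condition}, at the moment the recursion bottoms out, $W_i \ge 1$ for all $i \notin \cint{p, q}$, where either $\big|\cint{p,q}\big| \le \alpha$ or $l \ge \alpha$. In the latter case $W_i \ge l/\alpha \ge 1$ for every $i \in \mathbb{Z}_k$ and Step~1 is never entered. In the former, each iteration of Step~1 first waits until $W_{\overline a} \ge 1$ and then matches one request, advancing $\overline a$ by $1$; after at most $\big|\cint{p, q}\big| \le \alpha$ iterations, $\overline a$ has swept past every index in $\cint{p, q}$, so every $W_i$ is pushed to at least $1$ and the loop exits. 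Each iteration incurs size cost $1$ and, by the same rate identification as above, waiting cost at most $1$. Step~3 matches the fewer-than-$k$ remaining requests once, costing at most $1$ in size and no further waiting. Summing the three contributions proves the $O(\alpha)$ bound.

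The main obstacle is the recursion-depth analysis: one must verify carefully that the interval size is monotonically non-increasing along the chain, so that the type~(b) budget cannot be reset by a subsequent type~(a) call, and that $\lfloor 2L/\alpha\rfloor + 1 \le 3L/\alpha$ really holds throughout the range $L > \alpha$ where type~(b) calls occur.
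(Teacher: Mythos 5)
Your proof is correct and follows essentially the same route as the paper: bound the per-call cost by a constant (the paper tallies this as $8$, matching your $2$ in size cost plus $6$ in waiting cost), bound the number of recursive calls by $O(\alpha)$ by separately counting $l$-increments and interval shrinkages, and then bound the post-processing. Your explicit remark that the interval passed to each child is a sub-interval of the current one (so the shrinkage budget cannot be reset by intervening type~(a) calls) is a small point that the paper leaves implicit, but it does not change the substance of the argument.
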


\begin{full}
    \begin{proof}
    By the definition of \Cref{alg:multiplek}, each call of $\textbf{recurring}(\cint{p, q}, l)$ either increments $l$ by $1$ or reduces the number of elements in $\cint{p, q}$ to at most $\big\lfloor 2 \cdot \big|\cint{p, q}\big| / \alpha \big\rfloor + 1$.
    Since the recursion ends when $L \leq \alpha$ or $l \geq \alpha$, the addition to $l$ occurs at most $\lceil\alpha\rceil$ times.
    Let $L_n$ be the number of elements after the $n$th reduction. 
    Then, for $n$ with $L_n>\alpha$, we have
    \[
        L_{n} \leq \frac{2}{\alpha} L_{n-1} + 1 \leq \frac{3}{\alpha} L_{n-1} \leq \left(\frac{3}{\alpha}\right)^{n} \cdot k = \left(\frac{3}{\alpha}\right)^{n} \cdot \alpha^\alpha,
    \]
    by $L_0 = k$.
    The number of elements in the interval after $2\lceil\alpha\rceil$ iterations is at most
    \[
        L_{2\lceil\alpha\rceil} \leq \left(\frac{3}{\alpha}\right)^{2\alpha} \cdot \alpha^\alpha \leq \frac{3^{2\alpha}}{\alpha^\alpha} 
        \leq \left(\frac{9}{\alpha}\right)^{\alpha} 
        \leq \alpha,
    \]
    where the last inequality is justified at least when $\alpha \geq 9$.
    Consequently, the number of elements is reduced to $\alpha$ or less in $O(\alpha)$ iterations. Thus, the number of recursive calls is $O(\alpha)$.

    The cost incurred by the algorithm in each recursive call can be determined by examining the critical path of maximum cost incurred at each line in the process,
    which is no more than $8$. Combined with $O(\alpha)$ recursive calls, the total cost is also $O(\alpha)$.

    After recursion ends, the algorithm attempts to ensure $W_i \geq 1$ for all $i \in \mathbb{Z}_k$. 
    If the recursion finishes by $l\ge \alpha$, then the value $W_i$ is at least $l/\alpha=1$ for all $i\in\mathbb{Z}_k$. 
    Conversely, if the recursion finishes by $L\le \alpha$, each element in this interval can be treated at a cost of at most $2$ ($1$ for the size cost and $1$ for the waiting cost). Consequently, it is possible to achieve $W_i\ge 1$ for all $i\in\mathbb{Z}_k$ at a cost of at most $2L=O(\alpha)$.
    Completing the phase may require an additional cost of $1$ to match all remaining requests.

    Therefore, the cost incurred by the proposed algorithm per phase is $O(\alpha)$.
\end{proof}
\end{full}

Next, we provide a lower bound on the cost incurred by any algorithm during each phase.
\begin{restatable}{lemma}{lmmoptcostphase} \label{lmm:optcost_phase}
    At the end of a phase, the cost incurred within that phase by any algorithm is at least $1$ (when we treat the waiting costs as imposed sequentially at each moment, rather than at the time of matching). 
\end{restatable}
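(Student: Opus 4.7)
My plan is to dichotomize on whether the competitor algorithm $A$ performs, during the phase, at least one match whose group size is not a multiple of $k$. Since $B_f = \{kn \mid n \in \mathbb{Z}_{++}\}$, a match of $n$ requests with $n \not\equiv 0 \pmod{k}$ cannot be split into zero-cost sub-matches, so it incurs size cost at least $1$; in that case $A$'s cost within the phase is already $\ge 1$ and we are done. The interesting case is when every match $A$ performs inside the phase has size a multiple of $k$; here I have to relate $A$'s waiting cost to one of the bookkeeping quantities $W_i$ and invoke the invariant $W_i \ge 1$ that the algorithm establishes by the end of the phase (via \Cref{lmm:rec_condition} together with the post-recursion clean-up Steps~1--3).

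To make that comparison, I would first normalize the carry-over. Let $c$ denote the number of unmatched requests $A$ holds at the beginning of the phase. Without loss of generality $c \in \{0, 1, \dots, k-1\}$: if $c \ge k$, then $A$ could match $\lfloor c/k\rfloor\cdot k$ of those requests at the start of the phase at zero size cost, which only reduces the waiting cost accumulated inside the phase. Set $i^* \coloneqq (k - c) \bmod k \in \mathbb{Z}_k$; by definition, $W_{i^*}$ is exactly the waiting cost, accumulated during the phase, of the greedy zero-size-cost reference algorithm that starts with this same effective carry-over $c$ and matches $k$ requests as soon as $k$ become available. The key exchange claim is that, among all algorithms with initial carry-over $c$ whose matches within the phase are all of size a multiple of $k$, this greedy reference minimises the waiting cost accumulated during the phase: once $k$ unmatched requests are available, any delay before matching them strictly increases the waiting cost paid thereafter, and no such delay can help, because the only way to lower the size cost of a future match is already automatic (all matches are multiples of $k$). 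Hence $A$'s within-phase waiting cost is at least $W_{i^*}$, which the algorithm guarantees to be at least $1$ at the end of the phase, giving the lemma.

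The main obstacle is not the exchange argument itself but being careful with two bookkeeping points. First, the definition of $W_i$ uses ``$(k-i)$ unmatched requests carried over'', so that $W_0$ literally corresponds to $k$ carried-over requests; I need to justify that this is semantically the same as zero effective carry-over, because the greedy reference would match those $k$ requests instantaneously at time $0$ and thereafter behave identically to the zero-carry-over instance, making the two definitions of $W_0$ coincide. Second, I must be precise that the ``waiting cost within the phase'' is the sequentially-imposed notion stated in the lemma, so that a request still unmatched at phase end contributes to the phase's waiting cost exactly for the time it spent unmatched inside the phase; this is what lets me compare $A$ and the greedy reference on a per-phase basis without worrying about matches that $A$ defers past the phase boundary. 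Once these two points are set up cleanly, Case~1 is a one-line size-cost argument and Case~2 reduces to combining the exchange inequality $A\text{'s waiting cost} \ge W_{i^*}$ with the invariant $W_{i^*} \ge 1$.
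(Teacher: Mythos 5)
Your proposal is correct and takes essentially the same route as the paper's proof: both dismiss any non-multiple-of-$k$ match as immediately incurring size cost $1$, reduce the remaining case to greedy algorithms that match only in multiples of $k$, identify the within-phase waiting cost of the carry-over-$c$ greedy reference with $W_{(k-c)\bmod k}$ (the paper does this via the observation that the unmatched count $\overline{s-i}$ equals the increase rate of $W_i$), and invoke the invariant $W_i\ge 1$ for all $i$ at phase end. You spell out the exchange argument and the carry-over normalization more explicitly, but the substance is identical.
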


\begin{full}
    \begin{proof}
    If an algorithm matches a number of requests that is not a multiple of $k$ within a phase, it immediately incurs a cost of $1$.
    Consequently, we only need to consider algorithms that do not match a set of requests whose size is not a multiple of $k$ within a phase. Furthermore, matching $k$ unmatched requests immediately does not incur a loss, so we consider such algorithms.

    An algorithm may carry over some requests from the previous phase. 
    For each $i \in \mathbb{Z}_k$, $W_i$ represents the waiting cost incurred by an algorithm that carries over $\overline{k - i}$ requests. This is because, when the number of requests already given in the phase is $s$, the number of requests left unmatched by an algorithm carrying over $\overline{k - i}$ requests is $\overline{s + k - i} = \overline{s - i}$, which equals the rate of increase of $W_i$ by the proposed algorithm.
    At the end of a phase, $W_i \geq 1$ for all $i \in \mathbb{Z}_k$. This implies that a waiting cost of at least $1$ is incurred regardless of the number of requests carried over.

    Therefore, any algorithm incurs a cost of at least $1$ within a phase.
\end{proof}

\end{full}

Based on the lemmas above, we now prove \Cref{thm:upper}.

\begin{proof} [Proof of Theorem \ref{thm:upper}]
    Suppose that the proposed algorithm completes $p~(\geq 1)$ phases for the given instance. 
    Then, the cost for the proposed algorithm is at most $(p + 1) \cdot O(\alpha)$ by \Cref{lmm:algcost_phase}, while the cost for any algorithm is at least $p$ by \Cref{lmm:optcost_phase}. Therefore, the competitive ratio for this instance is at most $(p + 1)\cdot O(\alpha) / p = O(\alpha)$.

    Conversely, suppose that the instance ends during the first phase. 
    If the optimal offline algorithm incurs a cost of at least $1$, then the cost incurred by the proposed algorithm during the first phase is $O(\alpha)$, resulting in the competitive ratio of $O(\alpha)$.
    If the cost incurred by the optimal offline algorithm is less than $1$, then only waiting costs are incurred as no requests are carried over to the first phase. This means that the total cost of the optimal offline algorithm is equal to $W_0$. Since the proposed algorithm continues to wait until $W_0$ reaches $2$, the total cost of the proposed algorithm is also equal to $W_0$. Therefore, the competitive ratio is $1$ for such an instance.

    Therefore, the proposed algorithm is $O(\alpha)~(=O(\log k / \log \log k))$-competitive.
\end{proof}





\subsection{Lower Bound} \label{subsec:lower}

In this subsection, we prove the following lower bound of the competitive ratio.
\begin{theorem} \label{thm:lower_bound}
For the OMDSC problem with a penalty function $f$ that satisfies $B_f = \{kn \mid n \in \mathbb{Z}_{++}\}$, 
every deterministic online algorithm has a competitive ratio of $\Omega(\log k / \log \log k)$.
\end{theorem}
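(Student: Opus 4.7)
The plan is to construct an adversarial instance (possibly with an adaptive adversary) forcing every deterministic online algorithm to pay cost $\Omega(\alpha)$ times the offline optimum, where $\alpha$ is defined by $\alpha^\alpha = k$ (so $\alpha = \Theta(\log k/\log\log k)$). The construction will mirror, in reverse, the phased descent used by the upper-bound algorithm in \Cref{subsec:upper}: whereas that algorithm drives every $W_i$ above $1$ at $O(\alpha)$ cost per phase by either shrinking a cyclic work interval by a factor of $\alpha$ or incrementing its level counter, the adversary will exhibit a request sequence for which every online algorithm is charged $\Omega(1)$ per phase, while some specific offline template $A_{i^\ast}$ accumulates only $O(1/\alpha)$ per phase. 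For each $i \in \mathbb{Z}_k$, let $A_i$ denote the canonical ``virtual'' offline strategy whose waiting-cost rate at time $t$ equals $\overline{s(t)-i}$; since any actual offline cost is within $O(1)$ of $\min_i \int_0^T \overline{s(t)-i}\,dt$, it suffices to design $s(\cdot)$ so that some $i^\ast$ has $\int_0^T \overline{s(t)-i^\ast}\,dt = O(1)$ while every online strategy pays $\Omega(\alpha)$.

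I would proceed in $\alpha$ phases, maintaining an alive cyclic interval $S_j \subseteq \mathbb{Z}_k$ initialized to $S_0 = \mathbb{Z}_k$ and shrinking by a factor of $\alpha$ per phase, so $|S_\alpha| = \Theta(1)$. The invariant after phase $j$ is that the online algorithm has paid cost at least $j$, and for every $i \in S_j$ the $A_i$ waiting cost accumulated so far is $O(j/\alpha)$. Entering phase $j$ with alive set $\cint{p,q}$ of length $L$, the adversary releases a burst shifting $\overline{s(t)}$ to a position at one end of a chosen sub-interval of $\cint{p,q}$ of length $L/\alpha$, and then waits for duration $T \asymp 1/L$. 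During the wait, the online algorithm either holds $\Omega(L)$ unmatched requests (paying waiting cost $\Omega(L\cdot T) = \Omega(1)$) or matches some group of non-$k$-multiple size (paying size cost $1$). Meanwhile, for every $i$ in the chosen sub-interval, $\overline{s(t)-i}$ stays at most $L/\alpha$ throughout the wait, giving $A_i$ contribution at most $(L/\alpha)\cdot T = O(1/\alpha)$. The choice of which sub-interval becomes $S_j$ is made adaptively after observing any intra-phase matches by the online algorithm, placing it cyclically far from online's current residue so that online cannot cheaply align with it. After $\alpha$ phases, $|S_\alpha| \geq 1$, so fixing any $i^\ast \in S_\alpha$ yields OPT cost $O(1)$ against online cost $\Omega(\alpha)$.

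The main obstacle will be designing the per-phase mechanics so that both invariants hold simultaneously, regardless of how the online algorithm behaves. Because $s(\cdot)$ is monotone non-decreasing, new alive sub-intervals cannot always be nested straightforwardly inside old ones once $\overline{s(t)}$ has advanced past $\cint{p,q}$; I expect to handle this by allowing the adversary to complete one or more full cyclic passes through $\mathbb{Z}_k$ within a phase when repositioning is needed, at essentially no additional $A_{i^\ast}$ cost since such passes can be made effectively instantaneous. A more delicate point is that the online algorithm may change its residue at unit size-cost in order to ``chase'' $\overline{s(t)}$ across phases or to save waiting cost by matching a partial group mid-phase; a careful case analysis on the number and timing of online's matches, using the cyclic structure of $\mathbb{Z}_k$ in the spirit of \Cref{lmm:s_ival_wait_2}, is needed to show that every such maneuver still contributes $\Omega(1)$ to the per-phase cost and that the adaptive placement of $S_j$ remains feasible inside $S_{j-1}$. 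The identity $\alpha^\alpha = k$ then converts $\alpha$ phases into the claimed $\Omega(\log k/\log\log k)$ competitive ratio.
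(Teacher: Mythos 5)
Your proposal takes essentially the same route as the paper's proof: an adaptive adversary maintaining a shrinking ``alive'' cyclic interval of residues, tracking for each $i \in \mathbb{Z}_k$ the waiting cost $W_i$ of the canonical offline strategy with that residue offset, forcing the online algorithm to pay $\Omega(1)$ per round while the alive residues accumulate only $O(1/\alpha)$, and concluding with $\Theta(\alpha)$ rounds via $\alpha^\alpha = k$. The two obstacles you flag (repositioning $\overline{s}$ under monotonicity and the online algorithm re-aligning its residue mid-phase) are exactly the ones the paper resolves, and it does so by the concrete rule of choosing, after each online match, whichever of two adjacent length-$h$ sub-intervals at the $q$-end avoids $\overline{a}$, then bursting requests (a full cyclic pass) to reset $\overline{s}$; the paper also shrinks by $\alpha^2$ rather than $\alpha$ per round to leave slack for this two-candidate choice, but both give $\Theta(\alpha)$ rounds.
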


Let $\alpha$ be a positive real such that $\alpha^\alpha = k$.
We fix an arbitrary online algorithm $\ALG$ and construct an instance according to the behavior of $\ALG$, denoted as $\sigma_{\ALG}$. The instance $\sigma_{\ALG}$ is composed of multiple rounds. To construct $\sigma_{\ALG}$, we introduce variables similar to \cref{def:upper_variables}.

\begin{definition} \label{def:lower_variables}
For the instance $\sigma_\ALG$ and time $t$, the variables $W_0(t),W_1(t),\dots,W_{k-1}(t)$, $s(t)$, and $a(t)$ are defined as follows:
\begin{itemize}
    \item $W_i(t)~(i \in \mathbb{Z}_k)$: the waiting cost incurred by time $t$ on the algorithm that greedily performs matches with no size cost, assuming $i$ requests are matched at time $0$.
    \item $s(t)$: the number of requests given by time $t$.
    \item $a(t)$: the number of requests matched by $\ALG$ by time $t$.
\end{itemize}
\end{definition}
We will omit the argument $t$ when there is no confusion.

The instance $\sigma_{\ALG}$ contains $k-1$ requests with arrival time at $0$.
For $\sigma_{\ALG}$ and any $i \in \mathbb{Z}_k$, there exists an offline algorithm that incurs a cost of at most $2+W_i(t^*)$, where $t^*$ is the time when the last requests are given.
This can be achieved by matching $i$ requests at time $0$, greedily matching a multiple of $k$ requests in the middle, and matching all the remaining requests at time $t^*$.
While constructing $\sigma_{\ALG}$, we manage an interval $\cint{p, q}$ where $W_i = O(1)$ for any $i \in \cint{p, q}$. As rounds progress, the interval is gradually narrowed, and $\ALG$ incurs a cost of $\Omega(1)$ per round.
We show that it is possible to construct an instance $\sigma_{\ALG}$ consisting of $\Omega(\alpha)$ rounds while keeping such an interval $\cint{p, q}$ nonempty.
This implies that while there exists an offline algorithm with a cost of $O(1)$, $\ALG$ incurs a cost of $\Omega(\alpha)$, indicating that the competitive ratio of $\ALG$ is $\Omega(\alpha)=\Omega(\log k/\log\log k)$.

The interval $\cint{p, q}$ is initialized to $\cint{0, k - 1}$. Additionally, we set variable $n$, which counts the number of rounds, to $0$.
The instance gives $k - 1$ requests at time $0$.
Subsequent requests are provided based on the following events, ensuring that (i) $k/\alpha^{2n}\le |\cint{p, q}|\le k / \alpha^n$, (ii) $W_i \leq 2n / \alpha$ for all $i \in \cint{p, q}$, and (iii) $\overline{a} \in \cint{q + 1, p}$ and $\overline{s} = q$.
If multiple events occur simultaneously, we assume that the event listed first takes precedence.

\begin{description}
\setlength{\itemsep}{6pt}
    \item[(a) {$\big|\cint{p, q}\big| < \alpha^2$}] 
        Set $n^*$ to be $n$ and finalize the instance.

    \item[(b) $\ALG$ matches a non-multiple of $k$ requests] 
        Let $h = \big\lceil |\cint{p, q}| / \alpha^2 \big\rceil$.
        If $\overline{a} \notin \cint{q - h + 1, q}$ after the match, then replace $(p, q)$ with $(\overline{q - h + 1}, q)$. Otherwise, replace $(p, q)$ with $(\overline{q - 2h + 1}, \overline{q - h})$.
        Then, give $\overline{q - s}$ requests so that $\overline{s} = q$. Proceed to the next round and increase round count $n$ by $1$. 

    \item[(c) $W_{\overline{a}}$ increased by $1$ from the previous event] 
        Let $h = \big\lceil |\cint{p, q}| / \alpha^2 \big\rceil$ and replace $(p,q)$ with $(\overline{q - h + 1},q)$. Then, proceed to the next round and increase round count $n$ by $1$.
\end{description}

\begin{figure}[ht]
    \centering
    \begin{tikzpicture}[scale=.8,font=\small]
    \coordinate (p) at (2,0);
    \coordinate (q) at (9,0);
    \coordinate (hh) at (6,0);
    \coordinate (h) at (7.5,0);

    \draw[red,thick,fill=red!10] (10,0) -- (q) -- ++(0.5,0.3) -- (10,0.3);
    \draw[red,thick,fill=red!10] (0, 0.3) -- ($(p)+(0,0.3)$) node[midway, above] {$\cint{q+1,p}$} -- (p) -- (0,0);

    \draw[blue,thick,fill=blue!10] (hh) -- ++(0.5,0.3) -- ($(h)+(0,0.3)$) -- (h);
    \draw[blue,thick,fill=blue!10] (h) -- ++(0.5,0.3) -- ($(q)+(0,0.3)$) -- (q);

    \fill (p) circle (3pt) node[below] {$p$};
    \fill (q) circle (3pt) node[below] {$q=\overline{s}$};
    \fill (h) circle (3pt) node[below] {$q{-}h$};
    \fill (hh) circle (3pt) node[below] {$q{-}2h$};

    \draw[very thick] (-0.1,0) -- (10.1,0);
    \foreach \x in {0,0.5,...,10} \draw[very thin] (\x,.1) -- (\x,-.1);
    \node[below] at (10.3,0) {$\mathbb{Z}_k$};
\end{tikzpicture}
    \caption{Relative positions of $p$, $q$, and $\overline{s}$}
    \label{fig:relations_lower}
\end{figure}

Now, we prove that the competitive ratio of $\ALG$ for $\sigma_{\ALG}$ is $\Omega(\alpha)$.
To prove the lower bound, we bound the cost incurred by any online algorithm $\ALG$ on $\sigma_{\ALG}$ from below and the cost for the optimal offline algorithm on $\sigma_{\ALG}$ from above.
\begin{short}
Due to space limitations, some proofs are deferred to \Cref{sec:proof_lower}.
\end{short}

\begin{full}
    First, we show a lemma that establishes a lower bound on the number of elements in an interval where the increment of $W_i$ in the interval is at most $2 / \alpha$ times the increment of $W_j$ for specific $j$.

\begin{restatable}{lemma}{lmmadvcondinc} \label{lmm:adv_cond_inc}
    For an interval $\cint{p,q}$ with the number of elements at least $\alpha^2$, let $L \coloneqq \big|\cint{p, q}\big|$ and $h \coloneqq \left\lceil L / \alpha^2 \right\rceil$.
    Then, for any $i \in \cint{q + 1, p}$ and $j \in \cint{q - 2h + 1, q}$, the increment of $W_j$ is at most $2 / \alpha$ times the increment of $W_i$ in the state where $\overline{s} = q$.
\end{restatable}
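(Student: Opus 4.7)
The key observation is that, during any time interval in which $\overline{s} = q$ holds, no new requests arrive (since $s$ does not change), and therefore each $W_i$ grows linearly with the constant rate $\overline{s - i} = \overline{q - i}$. Consequently, the ratio of the increment of $W_j$ to the increment of $W_i$ over any such subinterval is exactly $\overline{q - j}/\overline{q - i}$, independent of the length of the interval. Hence it suffices to bound
\[
    \max_{i \in \cint{q+1, p},\; j \in \cint{q - 2h + 1, q}} \frac{\overline{q - j}}{\overline{q - i}} \;\le\; \frac{2}{\alpha}.
\]

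Next, I would identify the extremal values on each side by tracing the cyclic interval. As $i$ ranges over $\cint{q+1, p}$, the quantities $\overline{q - i}$ run through $k-1, k-2, \ldots, L-1$, the last value coming from $\overline{q - p} = L - 1$ (since $|\cint{p,q}| = L$). Hence the minimum of $\overline{q - i}$ on this range equals $L - 1$, attained at $i = p$. Similarly, as $j$ ranges over $\cint{q - 2h + 1, q}$, the quantities $\overline{q - j}$ run through $0, 1, \ldots, 2h - 1$, so their maximum is $2h - 1$. Therefore the ratio in question is at most $(2h - 1)/(L - 1)$.

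Finally, I would verify that $(2h - 1)/(L - 1) \le 2/\alpha$ using $h = \lceil L/\alpha^2 \rceil \le L/\alpha^2 + 1$ together with $L \ge \alpha^2$. Clearing denominators, this reduces to $\alpha(2h - 1) \le 2(L - 1)$, i.e.\ essentially $2L/\alpha + \alpha \le 2L - 2$, which holds whenever $L \ge \alpha^2$ and $\alpha$ exceeds a small constant; this is the only regime needed for the asymptotic $\Omega(\alpha)$ lower bound. The main obstacle is purely bookkeeping: keeping the orientation of the cyclic intervals and the signs in the modular arithmetic straight long enough to identify the correct extremal indices. Once the rate identity $\dv*{W_i}{t} = \overline{q - i}$ is pinned down, the remainder is a clean inequality.
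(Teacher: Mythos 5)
Your proposal is correct and follows essentially the same argument as the paper: reduce the ratio of increments to the ratio of rates $\overline{q-j}/\overline{q-i}$, identify the extremal values $\overline{q-j}\le 2h-1$ and $\overline{q-i}\ge \overline{q-p}=L-1$, and then verify $(2h-1)/(L-1)\le 2/\alpha$ from $h=\lceil L/\alpha^2\rceil$ and $L\ge\alpha^2$ with $\alpha$ bounded below by a small constant (the paper uses $\alpha\ge 4$). The paper's displayed inequality~(1) is precisely the clearing-of-denominators step you describe.
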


\begin{proof}
    By $\alpha \geq 4$ and $L \geq \alpha^2$, we have
    \begin{align}
        \frac{2}{\alpha} \cdot (L - 1) - \left( \frac{2L}{\alpha^2} + 1 \right)
        = \frac{2L}{\alpha} \cdot \left( 1 - \frac{1}{\alpha} \right) - \frac{2}{\alpha} - 1
        \geq \frac{2\alpha^2}{\alpha} \cdot \left(1 - \frac{1}{4}\right) - \frac{2}{4} - 1 > 0.
        \label{ineq:lmm_adv_cond_inc}
    \end{align}
    Therefore, for any $i \in \cint{q + 1, p}$ and $j\in \cint{q - 2h + 1, q}$, we get
    \begin{align*}
        \dv{W_j}{t} = \overline{q - j} 
        \leq 2 \cdot \left\lceil \frac{L}{\alpha^2} \right\rceil - 1
        \leq 2 \cdot \frac{L}{\alpha^2} + 1
        \leq \frac{2}{\alpha} \cdot \left(L - 1\right)
        \leq \frac{2}{\alpha} \cdot \overline{q - p}
        \leq \frac{2}{\alpha} \cdot \dv{W_i}{t},
    \end{align*}
    where the third inequality follows from \eqref{ineq:lmm_adv_cond_inc}.
    Since the rate of increase is constant while $\overline{s} = q$, the increment of $W_j$ is at most $2 / \alpha$ times the increment of $W_i$.
\end{proof}

By using this lemma, we bound the number of elements in $\cint{p, q}$ after $n$ rounds from below, and bound the values of $W_i$ for all $i \in \cint{p, q}$ from above.
\begin{restatable}{lemma}{lmmadvcondition} \label{lmm:adv_condition}
    Let $n\in\{0,1,2,\dots,n^*\}$.
    When the $n$th round of $\sigma_{\ALG}$ begins, the following three conditions are satisfied:
    (i) the number of elements in the interval $\cint{p, q}$ is at least $k / \alpha^{2n}$ and at most $k / \alpha^n$, (ii) $W_i \leq 2n / \alpha$ for all $i \in \cint{p, q}$, and (iii) $\overline{a} \in \cint{q + 1, p}$ and $\overline{s} = q$.
\end{restatable}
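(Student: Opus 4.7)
The plan is to prove the three invariants by induction on $n$. The base case $n=0$ follows directly from the initialization: the starting interval is $\cint{0,k-1}$ whose size is exactly $k = k/\alpha^0$, every $W_i$ equals $0 \le 2\cdot 0/\alpha$ at time $0$, and after the $k-1$ initial requests we have $\overline{s} = k-1 = q$ and $\overline{a}=0 \in \cint{q+1,p} = \{0\}$.

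For the inductive step, suppose the three invariants hold when round $n$ begins and consider the event that triggers the transition to round $n+1$. Throughout the round the precondition $\overline{s}=q$ is maintained (no new requests are added until the event fires, except in event (b) which is executed at the boundary between rounds), so the increase rates $\dv*{W_i}{t}$ are constant and Lemma \ref{lmm:adv_cond_inc} applies with the current $(p,q)$. By the inductive hypothesis $\overline{a} \in \cint{q+1,p}$, so for every $j \in \cint{q-2h+1,q}$ the lemma yields that the increment of $W_j$ during the round is at most $\tfrac{2}{\alpha}$ times the increment of $W_{\overline{a}}$. Neither event (b) nor event (c) can have fired earlier in the round (they would have ended it), so $W_{\overline{a}}$ has increased by at most $1$ during the round, and hence $W_j$ increased by at most $2/\alpha$. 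Since in both events the new interval is a subset of $\cint{q-2h+1,q}$, the new values satisfy $W_j \le \tfrac{2n}{\alpha} + \tfrac{2}{\alpha} = \tfrac{2(n+1)}{\alpha}$, giving invariant (ii).

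The size bound (i) is a short calculation: with $L = |\cint{p,q}|$ and $h = \lceil L/\alpha^2\rceil$, the new interval has exactly $h$ elements. The lower bound $h \ge L/\alpha^2 \ge k/\alpha^{2(n+1)}$ follows from the inductive lower bound on $L$. For the upper bound, event (a) did not fire so $L \ge \alpha^2$, giving $h \le L/\alpha^2 + 1 \le 2L/\alpha^2 \le 2k/\alpha^{n+2} \le k/\alpha^{n+1}$ provided $\alpha \ge 2$.

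For invariant (iii), a small case analysis is required. In event (c), $\overline{a}$ does not change and the new interval $\cint{\overline{q-h+1},q}$ lies inside the old $\cint{p,q}$, so the new outside region contains the old one and still contains $\overline{a}$; the value $\overline{s}=q$ is preserved because $q$ does not change. In event (b), the construction explicitly chooses the new endpoints so that the current $\overline{a}$ sits outside the new interval, and the additional $\overline{q_{\text{new}}-s}$ requests are inserted precisely to restore $\overline{s}=q_{\text{new}}$. The main delicate point I expect is verifying that the two sub-cases of event (b) together cover every possible post-match value of $\overline{a}$ while keeping it in $\cint{q_{\text{new}}+1,p_{\text{new}}}$; this reduces to checking that $\cint{q-h+1,q}$ and $\cint{q-2h+1,q-h}$ together exhaust the possibilities handled by the rule, which follows immediately from the definitions.
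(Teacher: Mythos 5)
Your proof is correct and follows essentially the same plan as the paper: induction on $n$, with the base case read off the initialization, and the inductive step split on which event ends round $n$, applying Lemma \ref{lmm:adv_cond_inc} with $i=\overline{a}$ to obtain invariant (ii), a direct size calculation for (i), and a case check for (iii). The only cosmetic differences are that you bound $h\le 2L/\alpha^2$ (needing $\alpha\ge 2$) where the paper bounds $h\le L/\alpha$ (using $\alpha\ge 4$), both of which suffice, and that you treat the two sub-cases of event (b) a little more briefly than the paper; the key observations — that $W_{\overline{a}}$ increases by at most $1$ within a round, that the new interval always sits inside $\cint{q-2h+1,q}$, and that in both sub-cases the post-match $\overline{a}$ lands in $\cint{q'+1,p'}$ — are all present and correct.
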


\begin{proof}
    We prove this by induction on $n$.

    At the beginning of $0$th round, the number of elements in $\cint{p, q}$ is $k / \alpha^0 = k$ since $\cint{p, q} = \cint{0, k - 1}$, and $W_i = 0 \leq 2 \cdot 0 / \alpha$ for all $i \in \cint{0, k - 1}$. 
    In addition, we have $\overline{a} = 0\in \cint{0,0}=\cint{k,0}$ and $\overline{s} = k - 1$.
    Thus, conditions (i), (ii), and (iii) are satisfied. 

    For a non-negative integer $n \in \{0,1,2,\dots,n^*-1\}$, suppose that conditions (i), (ii), and (iii) are satisfied at the beginning of the $n$th round. 
    We show that the three conditions are also satisfied at the beginning of the $(n+1)$st round. 
    There are two types of events that trigger a transition to the next round: (b) $\ALG$ matches a non-multiple of $k$ requests, and (c) $W_{\overline{a}}$ has increased by $1$ since the last event.
    In both cases, we prove that the inductive step can be carried out by using \Cref{lmm:adv_cond_inc}. 
    Let $p,q,W_i~(i\in\mathbb{Z}_k),s,a$ be the variables at the beginning of the $n$th round and $p',q',W'_i~(i\in\mathbb{Z}_k),s',a'$ be the variables at the beginning of the $(n+1)$st round.
    In addition, let $h=\big\lceil\big|\cint{p,q}\rceil/\alpha^2\big\rceil$.
    By the inductive hypothesis, we have $k/\alpha^{2n}\le \big|\cint{p,q}\big|\le k/\alpha^{n}$ and $\overline{a} \in \cint{q + 1, p}$.
    Also, as $n+1\le n^*$, we have $\big|\cint{p,q}\big|\ge \alpha^2$.

\begin{description}
\setlength{\itemsep}{6pt}
    \item[(b) When $\ALG$ matches a non-multiple of $k$ requests] \,\\
        By the definition of the procedure, we have $\big|\cint{p',q'}\big|=h=\big\lceil\big|\cint{p,q}\big|/\alpha^2\big\rceil\ge \lceil(k/\alpha^{2n})/\alpha^2\rceil\ge k/\alpha^{2n+2}$.
        Also, we have $\big|\cint{p',q'}\big|=h=\big\lceil\big|\cint{p,q}\big|/\alpha^2\big\rceil\le
        1+\big|\cint{p,q}\big|/\alpha^2\le \big|\cint{p,q}\big|/\alpha\le k/\alpha^{n+1}$ by $\big|\cint{p,q}\big|\ge \alpha^2$ and $\alpha\ge 4$.
        Thus, the condition (i) is satisfied.
        As $W_{\overline{a}}$ increases at most $1$ in the $n$th round, the increment of $W_j$ is less than $2 / \alpha$ for all $j \in \cint{p', q'} \subseteq \cint{q - 2h + 1, q}$ by \Cref{lmm:adv_cond_inc}.
        Therefore, $W'_j \leq 2(n + 1) / \alpha$ for all $j \in \cint{p', q'}$, satisfying the condition (ii).
        Moreover, we have $\overline{a'}\in\cint{q'+1,p'}$ and $\overline{s'}=q'$, and thus the condition (iii) is also met.

    \item[(c) When $W_{\overline{a}}$ has increased by $1$ since the last event] \,\\
        By definition, we have $p'=q-h+1$, $q'=q$, $s'=s$, and $a'=a$.
        Thus, we have $\big|\cint{p',q'}\big|=h$. Consequently, by applying the same analysis as above, we conclude that $k/\alpha^{2n}\le \big|\cint{p',q'}\big|\le k/\alpha^n$.
        This ensures that the condition (i) is satisfied
        As $W_{\overline{a}}$ increases exactly $1$ in the $n$th round, the increment of $W_j$ is less than $2 / \alpha$ for all $j \in \cint{p', q'} = \cint{q-h+1,q}\subseteq \cint{q-2h+1,q}$ by \Cref{lmm:adv_cond_inc}. Thus, $W'_j \leq 2(n + 1) / \alpha$ for all $j \in \cint{p', q'}$, satisfying the condition (ii).
        Furthermore, we have $\overline{a'}=\overline{a}\in \cint{q+1,p}\subseteq \cint{q'+1,p'}$ and $\overline{s'}=q'$, and thus the condition (iii) is also met.
\end{description}
Therefore, in either case, the conditions (i), (ii), and (iii) are satisfied at the beginning of the $(n+1)$st round. This completes the proof by induction.
\end{proof}

In what follows, we bound the cost incurred by any online algorithm $\ALG$ on $\sigma_{\ALG}$ from below and the cost for the optimal offline algorithm on $\sigma_{\ALG}$ from above.

\end{full}

\begin{restatable}{lemma}{lmmadvstepnumber}\label{lmm:adv_stepnumber}
    The number of rounds $n^*$ satisfies $\alpha / 2 - 1\le n^*\le \alpha$.
\end{restatable}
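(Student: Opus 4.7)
The plan is to derive both bounds directly from condition (i) of \Cref{lmm:adv_condition}, which guarantees that at the start of round $n$ we have $k/\alpha^{2n} \le \big|\cint{p,q}\big| \le k/\alpha^n$, together with the termination condition of event (a), namely $\big|\cint{p,q}\big| < \alpha^2$ at round $n^*$. Recall also that $k = \alpha^\alpha$.

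For the upper bound $n^* \le \alpha$, I would simply note that a cyclic interval is nonempty throughout the construction, so $\big|\cint{p,q}\big| \ge 1$ at the start of round $n^*$. Combining this with condition (i) gives $1 \le \big|\cint{p,q}\big| \le k/\alpha^{n^*} = \alpha^{\alpha - n^*}$, from which $\alpha - n^* \ge 0$ follows immediately, i.e., $n^* \le \alpha$.

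For the lower bound $n^* \ge \alpha/2 - 1$, I would use the other side of condition (i). Since round $n^*$ terminates via event (a), we have $\big|\cint{p,q}\big| < \alpha^2$ at the start of round $n^*$, whereas condition (i) requires $\big|\cint{p,q}\big| \ge k/\alpha^{2n^*} = \alpha^{\alpha - 2n^*}$. Thus $\alpha^{\alpha - 2n^*} < \alpha^2$, and taking $\log_\alpha$ (valid since $\alpha \ge 2$) yields $\alpha - 2n^* < 2$, which rearranges to $n^* > \alpha/2 - 1$, hence $n^* \ge \alpha/2 - 1$.

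There is essentially no obstacle beyond carefully invoking \Cref{lmm:adv_condition} at the correct round: the lemma is stated for $n \in \{0, 1, \ldots, n^*\}$, so both inequalities are available at the beginning of round $n^*$ itself, which is precisely where event (a) applies. The only minor point to mention is that the argument implicitly assumes $\alpha \ge 4$ (used throughout \Cref{subsec:lower} to make $\alpha^2 \le k$ and make the recursion nontrivial), so that $\log_\alpha$ is monotone and the statement $\alpha/2 - 1 \le n^*$ is nonvacuous.
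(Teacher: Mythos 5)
Your proof is correct and follows essentially the same approach as the paper: both derive the lower bound by combining the termination condition $\big|\cint{p,q}\big| < \alpha^2$ at round $n^*$ with condition (i) of \Cref{lmm:adv_condition}. For the upper bound, the paper applies condition (i) at round $n^*-1$ (using $L_{n^*-1} \ge \alpha^2$ to get $n^* \le \alpha - 1$), whereas you apply it at round $n^*$ together with nonemptiness of the cyclic interval to get $n^* \le \alpha$; this is a minor simplification that gives a marginally weaker but still sufficient bound.
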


\begin{full}
    \begin{proof}
%
    For $n\in\{0,1,2,\dots,n^*\}$, let $L_n$ be the number of elements in the interval at the $n$th round. 
    From \Cref{lmm:adv_condition}, we have $k/\alpha^{2n}\le L_n\le k/\alpha^n$.
    %
    By the termination condition, we have $L_{n^*} < \alpha^2$ and $L_n\ge \alpha^2$ for all $n\in\{0,1,2,\dots,n^*-1\}$. 
    Since $\alpha^2>L_{n^*}\ge k/\alpha^{2n^*}=\alpha^{\alpha-2n^*}$, we have $n^*\ge (\alpha-2)/2 = \alpha/2 - 1$.
    In addition, since $\alpha^2\le L_{n^*-1}\le k/\alpha^{n^*-1}=\alpha^{\alpha-n^*+1}$, we have $n^*\le \alpha-1\le \alpha$.
    Therefore, we obtain $\alpha/2 - 1 \le n^*\le \alpha$.
\end{proof}
\end{full}

From \Cref{lmm:adv_stepnumber}, we can prove the following lemmas.
\begin{restatable}{lemma}{lmmoptupperforlb} \label{lmm:opt_upper_for_lb}
    For any deterministic online algorithm $\ALG$, there exists an offline algorithm that incurs a cost of at most $4$ for the instance $\sigma_{\ALG}$.
\end{restatable}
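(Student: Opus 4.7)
The plan is to exhibit an explicit offline matching schedule whose total cost on $\sigma_{\ALG}$ is at most $4$, by combining \Cref{lmm:adv_condition} with \Cref{lmm:adv_stepnumber}. First, I would let $t^*$ denote the time at which event~(a) fires, i.e., the beginning of round $n^*$; by the construction of $\sigma_{\ALG}$ no additional requests arrive after $t^*$. Applying \Cref{lmm:adv_condition} at $n = n^*$ shows that the cyclic interval $\cint{p,q}$ is nonempty (its cardinality is at least $k/\alpha^{2n^*}\ge 1$) and that every $i\in\cint{p,q}$ satisfies $W_i(t^*)\le 2n^*/\alpha$. Combining this with the bound $n^*\le \alpha$ from \Cref{lmm:adv_stepnumber} gives $W_i(t^*)\le 2$ for some, in fact for every, $i\in\cint{p,q}$.

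Next I would fix any such $i$ and define the offline algorithm to do the following: (i)~at time $0$, match $i$ of the $k-1$ initial requests, which is feasible because $0\le i\le k-1$; (ii)~on the interval $(0,t^*]$, match a group of exactly $k$ requests whenever $k$ unmatched requests have accumulated (each such match has size cost $0$ since $k\in B_f$); and (iii)~at time $t^*$, match the $\overline{q-i}$ requests that remain. This schedule realises precisely the greedy strategy used to define $W_i$, except that the leftover requests are actually cleared at $t^*$ rather than being carried on indefinitely. Consequently, the total waiting cost paid by the offline algorithm equals the integral over $[0,t^*]$ of the number of requests unmatched by the greedy strategy, which is exactly $W_i(t^*)\le 2$.

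The size-cost accounting is then immediate: step~(ii) contributes $0$, and steps~(i) and~(iii) each contribute at most $1$, because $i,\overline{q-i}\in\mathbb{Z}_k$ and so $f(i),f(\overline{q-i})\in\{0,1\}$ (with the convention that matching $0$ requests costs $0$). Summing all three contributions gives a total cost of at most $W_i(t^*)+2\le 2+2=4$, which proves the bound.

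The only delicate point I anticipate is the identification of the waiting cost actually paid by the offline algorithm with the quantity $W_i(t^*)$. This requires interpreting $W_i(t)$ as the time-integral of the unmatched-request count of the greedy strategy and observing that matching the residual requests at the upper endpoint $t^*$ of this integration interval pays precisely the accumulated waiting cost, with no overshoot. Once this interpretation is set, the rest of the argument is just a direct substitution of the two cited lemmas, and no further technical obstacle is expected.
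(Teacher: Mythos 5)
Your proof is correct and takes essentially the same approach as the paper: pick any $i$ in the final interval $\cint{p,q}$ (nonempty by \Cref{lmm:adv_condition}), build the offline schedule that matches $i$ requests at time $0$, clears in multiples of $k$ in between, and sweeps the residue at $t^*$, then bound the waiting cost by $W_i(t^*)\le 2n^*/\alpha\le 2$ via \Cref{lmm:adv_condition,lmm:adv_stepnumber} and the size cost by $2$. One small remark: the paper's written proof says the offline algorithm matches $\overline{i^*-1}$ requests at time $0$ while still asserting the waiting cost is $W_{i^*}$; given \Cref{def:lower_variables} (where $W_i$ is the waiting cost when exactly $i$ requests are matched at time $0$) and the descriptive paragraph preceding the lemmas, your choice of matching $i$ requests at time $0$ is the one consistent with the definition, so no correction is needed on your end.
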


\begin{full}
    \begin{proof}
Let $[p,q]$ be the interval at the end of the instance, and let $i^*$ be an arbitrary element in $[p,q]$.
According to \Cref{lmm:adv_condition}, the waiting cost $W_i$ is at most $2n^*/\alpha$ when the instance construction ends.
Furthermore, by applying \Cref{lmm:adv_stepnumber}, we have $2n^*/\alpha\le 2\alpha/\alpha=2$.

Consider an algorithm that matches $\overline{i^*-1}$ requests at time $0$, subsequently matches requests only in multiples of $k$ and does so instantaneously, and finally matches all the remaining unmatched requests when the last request is presented.
Then, it is not difficult to see that the total waiting cost of this algorithm is at most $W_{i^*}$.
Moreover, the total size cost is at most $2$, as it matches at most two sets of requests of size non a multiple of $k$.
Therefore, The total cost incurred by this algorithm is at most $4$.
\end{proof}
\end{full}

\begin{restatable}{lemma}{lmmalglowerforlb} \label{lmm:alg_lower_for_lb}
    $\ALG(\sigma_{\ALG})\ge \alpha/2-1$ for any deterministic online algorithm $\ALG$.
\end{restatable}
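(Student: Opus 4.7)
The plan is to show that $\ALG$ incurs cost at least $1$ in each of the $n^*$ rounds of $\sigma_{\ALG}$; combined with \Cref{lmm:adv_stepnumber}, which gives $n^* \ge \alpha/2 - 1$, this yields the claim. I would split the per-round analysis according to which event, (b) or (c), terminates the round.

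A round ending with event (b) is handled directly: $\ALG$ just matched a set whose cardinality is not a multiple of $k$, and since $B_f = \{kn \mid n \in \mathbb{Z}_{++}\}$, the size cost of that single match is already at least $1$. The substantive case is a round ending with event (c). By the precedence rule, $\ALG$ performed no non-multiple-of-$k$ match during this round, so $\overline{a}$ stayed pinned at some fixed value $i$ throughout. I will argue that $\ALG$'s waiting cost over the round is at least $1$. At every instant $t$ of the round, the number of unmatched requests held by $\ALG$ is $s(t) - a(t)$, which is congruent to $s(t) - i$ modulo $k$ and non-negative, hence bounded below by $\overline{s(t) - i} = \dv*{W_i(t)}{t}$. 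Integrating this pointwise inequality from the start to the end of the round shows that $\ALG$'s waiting-cost increment is at least the increment of $W_i = W_{\overline{a}}$, which equals $1$ by the triggering condition of event (c).

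Summing the per-round lower bound over all $n^*$ rounds and invoking \Cref{lmm:adv_stepnumber} delivers $\ALG(\sigma_{\ALG}) \ge n^* \ge \alpha/2 - 1$. The only subtle point is the pointwise inequality $s(t) - a(t) \ge \overline{s(t) - i}$ used in case (c); it formalizes the intuition that while $\overline{a}$ is pinned at $i$, $\ALG$ cannot maintain fewer unmatched requests than the hypothetical greedy algorithm whose waiting cost is recorded by $W_i$, so $\ALG$'s waiting cost dominates that reference quantity on this interval. A minor bookkeeping check is also needed to ensure a single match (or a single stretch of waiting) is never counted toward two different rounds, which is immediate because event (b) ends a round precisely at the moment of the offending match, and event (c) ends it precisely when the relevant increment of $W_{\overline{a}}$ reaches $1$.
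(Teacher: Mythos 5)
Your proposal is correct and follows essentially the same strategy as the paper's (very terse) proof: bound the number of rounds below by $\alpha/2-1$ via Lemma~\ref{lmm:adv_stepnumber}, then argue that $\ALG$ incurs cost at least $1$ per round, splitting into event (b) and event (c). The paper simply asserts the per-round lower bound without elaboration; your case-(c) argument---that $\overline{a}$ is frozen at some $i$ by the precedence rule, that $s(t)-a(t)\ge\overline{s(t)-i}=\dv*{W_i(t)}{t}$ because the left side is a nonnegative integer congruent to the right side mod $k$, and that integrating this pointwise domination gives $\ALG$'s waiting cost $\ge W_i$-increment $=1$---is exactly the missing justification, and your remark on non-double-counting across rounds is the right sanity check.
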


\begin{full}
    \begin{proof}
    The number of rounds in $\sigma_{\ALG}$ is at least $\alpha/2-1$ by \Cref{lmm:adv_stepnumber}.
    As the rounds move on to the next, when the algorithm matches a non-multiple of $k$ requests or $W_{\overline{a}}$ increases by $1$, the algorithm incurs at least a cost of $1$ in each round.
    Therefore, the total cost incurred over the instance is at least $\alpha/2-1$.
\end{proof}
\end{full}

Now, we are ready to prove \Cref{thm:lower_bound}.
\begin{proof}[Proof of \Cref{thm:lower_bound}]
By combining \Cref{lmm:opt_upper_for_lb,lmm:alg_lower_for_lb}, the competitive ratio of $\ALG$ is at least 
\begin{equation*}
\sup_{\sigma} \frac{\ALG(\sigma)}{\OPT(\sigma)} 
\geq \frac{\ALG(\sigma_{\ALG})}{\OPT(\sigma_{\ALG})} 
\ge \frac{\alpha/2-1}{4} = \frac{\alpha}{8}-\frac{1}{4}
= \Omega\left(\frac{\log k}{\log \log k}\right).
\qedhere
\end{equation*}
\end{proof}

\section*{Acknowledgements}
This work was partially supported by JSPS KAKENHI Grant Number JP20K19739, JST PRESTO Grant Number JPMJPR2122, and Value Exchange Engineering, a joint research project between Mercari, Inc.\ and the RIISE.

\newpage
\bibliography{bibliography}

\begin{short}
\appendix 

\section{\texorpdfstring{$2$}{2}-competitive algorithm for \texorpdfstring{$f(n) = \lceil n / k \rceil$}{f(n)=ceil(n/k)}} \label{sec:appendix_2_comp}

\section{Lower Bound for match-all-remaining algorithms} \label{sec:simple_lower_bound}

\section{Omitted Proofs for  \texorpdfstring{\Cref{subsec:upper}}{Section 4.1}} \label{sec:proof_upper}

\lmmsivalwait*

\lmmoptcostwait*

\lmmalgcostphase*

\lmmoptcostphase*

\section{Omitted Proofs for \texorpdfstring{\Cref{subsec:lower}}{Section 4.2}} \label{sec:proof_lower}

We show some lemmas that are used in the proofs of \Cref{lmm:adv_stepnumber}, \Cref{lmm:opt_upper_for_lb}, and \Cref{lmm:alg_lower_for_lb}.

\lmmadvstepnumber*

\lmmoptupperforlb*


    



\lmmalglowerforlb*

\end{short}




\end{document}